\newcommand{\ifwidecol}{\iftoggle{widecol}}
\newcommand{\ifwidecolignorealigny}{%
  \ifwidecol{%
    \renewcommand{\\}{}
    \renewcommand{\notag}{}}{}
    \renewcommand{\nonumber}{}{}
    \renewcommand{\MoveEqLeft}{}}
\newcommand{\policyname}[1]{\ifmmode\textnormal{#1}\else\mbox{#1}\fi}
\def\P/{\policyname{P}}
\def\Gp/{\policyname{G\=/P}}
\def\Prio/{\policyname{Priority\=/\textit{c}}}
\def\Srpt/{\policyname{SRPT}}
\def\Psjf/{\policyname{PSJF}}
\def\GpPrio/{\policyname{\Gp//\Prio/}}
\def\MgPrio/{\policyname{M/G/1/\Prio/}}
\def\GpSrpt/{\policyname{\Gp//\Srpt/}}
\def\ArbArb/{\policyname{P$'$/S$'$}}
\renewcommand{\bar}{\overline}
\newcommand{\st}{\textnormal{st}}
\DeclareFixedFootnote{\matuszewska}{See Appendix~\ref{app:matuszewska}.}
\newtheorem*{rep@theorem}{\rep@title}
\newcommand{\newreptheorem}[2]{%
\newenvironment{rep#1}[1]{%
 \def\rep@title{#2 \ref{##1}}%
 \begin{rep@theorem}}%
 {\end{rep@theorem}}}
\renewcommand{\fs@ruled}{%
  \def\@fs@cfont{\bfseries}\let\@fs@capt\floatc@ruled
  \def\@fs@pre{%
    \vspace{\abovetopsep}%
    \hrule height\heavyrulewidth \kern2pt%
    \vspace{\belowrulesep}}%
  \def\@fs@post{%
    \vspace{\aboverulesep}%
    \kern2pt\hrule height\heavyrulewidth \relax%
    \vspace{\belowbottomsep}}%
  \def\@fs@mid{%
    \vspace{\aboverulesep}%
    \kern2pt\hrule height\lightrulewidth \kern2pt%
    \vspace{\belowrulesep}}%
  \let\@fs@iftopcapt\iftrue}
\newtheoremstyle{acmcase}
  {.5\baselineskip\@plus.2\baselineskip
    \@minus.2\baselineskip}
  {.5\baselineskip\@plus.2\baselineskip
    \@minus.2\baselineskip}
  {\@acmdefinitionbodyfont}
  {\@acmdefinitionindent}
  {\@acmdefinitionheadfont}
  {.}
  {.5em}
  {\thmname{#1}\thmnumber{ #2}\thmnote{{\@acmdefinitionnotefont: #3}}}
\theoremstyle{acmcase}
\newtheorem{case}{Case}
\newtheorem{theorem}{Theorem}%
\newtheorem{corollary}{Corollary}%
\newtheorem{lemma}{Lemma}%
\theoremstyle{acmdefinition}
\newtheorem{definition}{Definition}%
\theoremstyle{remark}
\newtheorem{remark}{Remark}%
\DeclarePairedDelimiter{\floor}{\lfloor}{\rfloor}
\newcommand{\ldelimE}{[}
\newcommand{\rdelimE}{]}
\DeclarePairedDelimiterXPP{\E}[1]{\mathbf{E}}{\ldelimE}{\rdelimE}{}{%
  
  #1}
\renewcommand{\Pr}{\mathbf{P}}
\title[Load Balancing Guardrails]{%
  Load Balancing Guardrails:
  Keeping Your Heavy Traffic on the Road to Low Response Times}
\author{Isaac Grosof}
\affiliation{%
  \institution{Carnegie Mellon University}
  \department{Computer Science Department}
  \streetaddress{5000 Forbes Ave}
  \city{Pittsburgh}
  \state{PA}
  \postcode{15213}
  \country{USA}}
\email{igrosof@cs.cmu.edu}
\author{Ziv Scully}
\affiliation{%
  \institution{Carnegie Mellon University}
  \department{Computer Science Department}
  \streetaddress{5000 Forbes Ave}
  \city{Pittsburgh}
  \state{PA}
  \postcode{15213}
  \country{USA}}
\email{zscully@cs.cmu.edu}
\author{Mor Harchol-Balter}
\affiliation{%
  \institution{Carnegie Mellon University}
  \department{Computer Science Department}
  \streetaddress{5000 Forbes Ave}
  \city{Pittsburgh}
  \state{PA}
  \postcode{15213}
  \country{USA}}
\email{harchol@cs.cmu.edu}
\begin{document}

\setcopyright{acmcopyright}
\acmJournal{POMACS}
\acmYear{2019}
\acmVolume{3}
\acmNumber{2}
\acmArticle{42}
\acmMonth{6}
\acmPrice{15.00}
\acmDOI{10.1145/3326157}

\makeatletter
\if@ACM@authorversion
  \let\@oldfntcp\footnotetextcopyrightpermission
  \newlength{\authorversionskip}
  \setlength{\authorversionskip}{0.39in}
  \renewcommand{\footnotetextcopyrightpermission}[1]{%
    \@oldfntcp{#1\vspace{\authorversionskip}}}
  \acmArticleSeq{0}
  \let\@oldhffont\@headfootfont
  \def\@headfootfont{%
    \renewcommand\@acmArticle[2]{##1}
    \@oldhffont}
\fi
\makeatother


\keywords{%
  load balancing;
  dispatching;
  scheduling;
  SRPT;
  response time;
  latency;
  sojourn time;
  heavy traffic}

\begin{CCSXML}
<ccs2012>
<concept>
<concept_id>10002944.10011123.10011674</concept_id>
<concept_desc>General and reference~Performance</concept_desc>
<concept_significance>500</concept_significance>
</concept>
<concept>
<concept_id>10002950.10003648.10003688.10003689</concept_id>
<concept_desc>Mathematics of computing~Queueing theory</concept_desc>
<concept_significance>500</concept_significance>
</concept>
<concept>
<concept_id>10003033.10003079.10003080</concept_id>
<concept_desc>Networks~Network performance modeling</concept_desc>
<concept_significance>500</concept_significance>
</concept>
<concept>
<concept_id>10003752.10003809.10003636.10003811</concept_id>
<concept_desc>Theory of computation~Routing and network design problems</concept_desc>
<concept_significance>300</concept_significance>
</concept>
<concept>
<concept>
<concept_id>10010147.10010341.10010342</concept_id>
<concept_desc>Computing methodologies~Model development and analysis</concept_desc>
<concept_significance>300</concept_significance>
</concept>
<concept>
<concept_id>10011007.10010940.10010941.10010949.10010957.10010688</concept_id>
<concept_desc>Software and its engineering~Scheduling</concept_desc>
<concept_significance>300</concept_significance>
</concept>
</ccs2012>
\end{CCSXML}

\ccsdesc[500]{General and reference~Performance}
\ccsdesc[500]{Mathematics of computing~Queueing theory}
\ccsdesc[500]{Networks~Network performance modeling}
\ccsdesc[300]{Theory of computation~Routing and network design problems}
\ccsdesc[300]{Computing methodologies~Model development and analysis}
\ccsdesc[300]{Software and its engineering~Scheduling}

\begin{abstract}
  Load balancing systems,
    comprising a central dispatcher and a scheduling policy at each server,
    are widely used in practice,
    and their response time has been extensively studied in the theoretical literature.
    While much is known about the scenario where the scheduling at the servers
  is First-Come-First-Served (FCFS),
  to minimize mean response time
  we must use Shortest-Remaining-Processing-Time (\Srpt/) scheduling at the servers.
  Much less is known about dispatching polices when \Srpt/ scheduling is used.
    Unfortunately, traditional dispatching policies
    that are used in practice in systems with FCFS servers
    often have poor performance
    in systems with \Srpt/ servers.
    In this paper, we devise a simple fix that can be applied
    to any dispatching policy.
    This fix, called \emph{guardrails},
    ensures that the dispatching policy yields optimal mean response time
    under heavy traffic
    when used in a system with \Srpt/ servers.
    \emph{Any} dispatching policy,
    when augmented with guardrails,
    becomes heavy-traffic optimal.
    Our results yield the first analytical bounds
    on mean response time for load balancing systems
    with \Srpt/ scheduling at the servers.
\end{abstract}
\maketitle
\section{Introduction}
\label{sec:intro}

Load balancers are ubiquitous throughout computer systems.
They act as a front-end to web server farms,
distributing HTTP requests to different servers.
They likewise act as a front-end to data centers and cloud computing pools,
where they distribute requests among servers and virtual machines.

\begin{figure}
  \includegraphics[width=\ifwidecol{0.55}{}\linewidth]{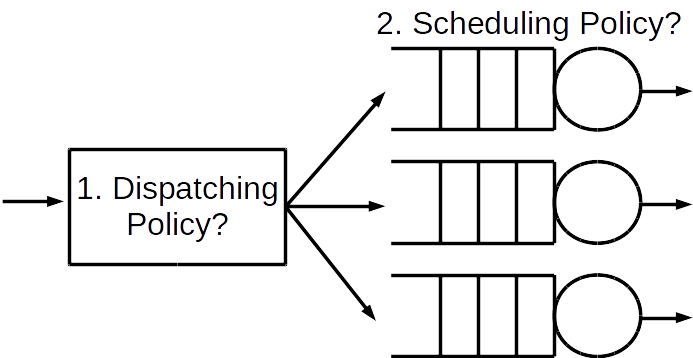}
  \caption{Two decision points within a load balancing system:
  (1) Pick the dispatching policy. (2) Pick the scheduling policy for the servers.}
  \label{fig:load-balancing-diagram}
\end{figure}

In this paper, we consider the immediate dispatch
load balancing model,
where each arriving job is immediately dispatched to a server,
as shown in Figure~\ref{fig:load-balancing-diagram}.
The system has two decision points:
\begin{enumerate}[(1)]
\item
  A \emph{dispatching policy} decides how to distribute jobs across the servers.
\item
  A \emph{scheduling policy} at each server
  decides which job to serve among those at that server.
\end{enumerate}
We ask:
\begin{quote}
  What (1) dispatching policy and (2) scheduling policy
  should we use to \emph{minimize mean response time} of jobs?
\end{quote}
We assume that the job arrival process is Poisson
and that job sizes are i.i.d. from a general size distribution.
We assume jobs are preemptible with no loss of work.
Finally, we assume that job sizes are known at the time the job arrives in the system.

With these assumptions, the scheduling question turns out to be easy to answer:
use Shortest-Remaining-Processing-Time (\Srpt/) at the servers.
No matter what dispatching decisions are made,
if we consider the sequence of jobs dispatched to a particular server,
the policy which minimizes mean response time for that server
must be to schedule those jobs in \Srpt/ order.
This follows from from the optimality of \Srpt/
for arbitrary arrival sequences~\cite{schrage-srpt-optimal}.
\Srpt/ scheduling is in fact already used in backend servers
\cite{ousterhout-homa, mor-web-server}.
Thus, in the remainder of this paper we assume \Srpt/ is used at the servers.

The question remains:
What dispatching policy minimizes mean response time
given \Srpt/ service at the servers?
While many dispatching policies have been considered in the literature,
they have mostly been considered in the context of
First-Come-First-Served (FCFS)
or Processor-Sharing (PS)
scheduling at the servers.
Popular dispatching policies include
Random routing \cite{963420, mor-sita},
Least-Work-Left (LWL) \cite{Harchol-Balter:2009:SRT:1555349.1555383, mor-sita, Bramson2012},
Join-Shortest-Queue (JSQ) \cite{winston_1977, weber_1978, 55688, GUPTA20071062},
JSQ\=/$d$ \cite{Bramson2012, Mukherjee:2016:UPL:3003977.3003990, 963420, Li2011},
Size-Interval-Task-Assignment (SITA) \cite{mor-sita, Bachmat:2008:ASI:1453175.1453199, FENG2005475},
Round-Robin (RR) \cite{liu-round-robin, mor-sita},
and many more \cite{Altman2011, Bonald:2004:ILB:1005686.1005729, ZHOU2018176, sbrc}.
However, only the simplest of these policies,
such as Random and RR,
have been studied for \Srpt/ servers \cite{mor-book, Down2006}.

One might hope that the same policies that yield low mean response time
when servers use FCFS scheduling
would also perform well when servers use \Srpt/ scheduling.
Unfortunately, this does not always hold.
For example, when the servers use FCFS,
it is well-known that LWL dispatching,
which sends each job to the server with the least remaining work,
outperforms Random dispatching,
which sends each job to a randomly chosen server.
(We write this as LWL/FCFS outperforms Random/FCFS.)
However, the opposite can happen when the servers use \Srpt/:
as shown in the scenario in Figure~\ref{fig:srpt-vs-fcfs},
Random/\Srpt/ can outperform LWL/\Srpt/.
Moreover, the performance difference is highly significant:
Random/\Srpt/ outperforms LWL/\Srpt/ by a factor of 5 or more under heavy load.
This means that LWL is making serious mistakes in dispatching decisions.
We can therefore see that the heuristics that served us well
for FCFS servers can steer us awry when we use \Srpt/ servers.

In this paper, we introduce \emph{guardrails},
a new technique for creating dispatching policies.
Given an arbitrary dispatching policy~\P/,
applying guardrails results in an improved policy Guarded-P (\Gp/).
We prove that the improved policy \Gp/ has asymptotically optimal mean response time
in the heavy traffic limit,
no matter what the initial policy~\P/ is.
We also show empirically that
adding guardrails to a policy almost always decreases its mean response time
(and never significantly increases it),
even outside the heavy-traffic regime.

\begin{figure}
  \includegraphics[width=\ifwidecol{0.55}{}\linewidth, trim={0 0 0 0.5em}]{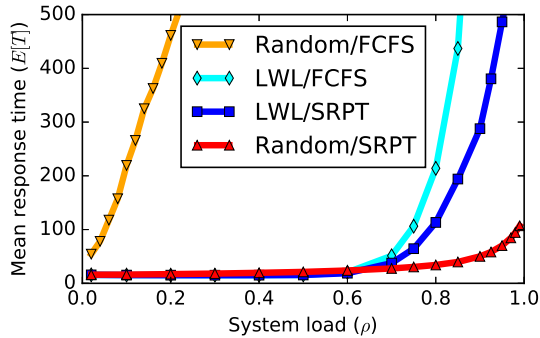}
  \vspace{\ifwidecol{-1.25}{-2.5}\baselineskip}
  \caption{Two dispatching policies: Random and LWL.
    Two scheduling policies: FCFS and \Srpt/.
    FCFS scheduling at the servers
    yields higher mean response time as a function of load,
    compared with \Srpt/ scheduling at the servers.
    Random dispatching is worse than LWL dispatching
    under FCFS scheduling at the servers,
    but Random dispatching is better than LWL dispatching
    under \Srpt/ scheduling at the servers.
    Simulation uses $k= 10$ servers. Size distribution shown is Bimodal with
    jobs of size 1 with probability 99.95\%
    and jobs of size 1000 with probability 0.05\%.}
  \label{fig:srpt-vs-fcfs}
\end{figure}

\begin{figure}
  \includegraphics[width=\ifwidecol{0.55}{}\linewidth, trim={0 0 0 0.5em}]{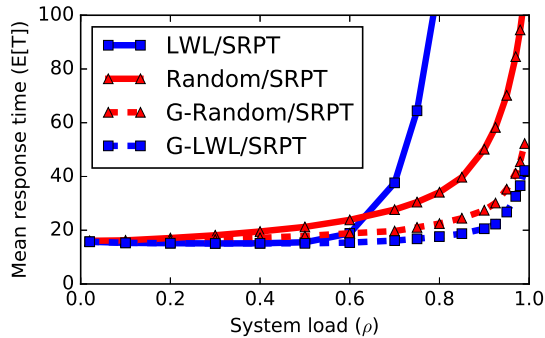}
  \vspace{\ifwidecol{-1.25}{-2.5}\baselineskip}
  \caption{Adding guardrails to LWL yields much lower mean response time
    as a function of load. Guardrails yield a factor of 3 improvement
    even at $\rho=0.8$, and a factor of 7 improvement at $\rho=0.9$.
    Adding guardrails to Random also yields significantly
    lower mean response time as a function of load.
    Simulation uses $k=10$ servers. Size distribution shown is Bimodal with
    jobs of size 1 with probability 99.95\%
    and jobs of size 1000 with probability 0.05\%.
    The guardrails have tightness $g=2$.}
  \label{fig:intro-guardrails}
\end{figure}

As an example of the power of guardrails,
Figure~\ref{fig:intro-guardrails} shows
the performance of guarded versions of LWL and Random,
namely G-LWL and G-Random.
The guardrails stop LWL from making serious mistakes
and dramatically improve its performance.
Random dispatching also benefits from guardrails.
Moreover, the guarded policies have a theoretical guarantee:
In the limit as load $\rho \to 1$,
G-Random/\Srpt/ and G-LWL/\Srpt/ converge to the optimal mean response time.
In contrast, unguarded Random/\Srpt/ is a factor of $k$ worse than optimal
in the $\rho \to 1$ limit,
where $k$ is the number of servers.

This paper makes the following contributions:
\begin{itemize}
\item In Section~\ref{sec:guardrails},
  we introduce guardrails,
  a technique for improving any dispatching policy.
\item In Section~\ref{sec:theorems},
  we bound the mean response time of any guarded dispatching policy
  when paired with SRPT scheduling at the servers.
  Using that bound,
  we prove that any guarded policy has asymptotically optimal mean response time
  as load $\rho \to 1$,
  subject to a technical condition on the job size distribution
  roughly equivalent to finite variance.
\item In Section~\ref{sec:simulation},
  we consider a wide variety of common dispatching policies.
  We empirically show that guardrails improve most of these at all loads.
\item In Section~\ref{sec:practical},
  we discuss practical considerations and extensions of guardrails,
  including
  \begin{itemize}
  \item guardrails for large systems,
    which may have multiple dispatchers and network delays;
  \item guardrails for scheduling policies other than \Srpt/; and
  \item guardrails for heterogeneous servers.
  \end{itemize}
\end{itemize}
We give a more technical summary of our theoretical results
and review related work in Section~\ref{sec:technical-summary}.

\section{Load Balancing Guardrails}
\label{sec:guardrails}

\subsection{What are Guardrails?}
\label{sec:guard-def}

Traditional dispatching policies aim to equalize load at each server.
However, minimizing mean response time requires more than balancing load:
we also need to find a way to favor small jobs.
Given that every server uses \Srpt/ scheduling,
if we can spread out the small jobs across the servers,
then we ensure that the maximum possible number of servers are working
on the smallest jobs available.
Our idea is to take any dispatching policy
and add ``guardrails'' that force it to spread out small jobs across the servers.

In the discussion above, ``small'' is a relative term.
Whatever the size of the smallest jobs currently in the system,
we would like to spread out jobs near that size across the servers.
To do this, we define the \emph{rank} of a job of size~$x$ to be
\begin{equation}
  \label{eq:rank}
  r = \floor{\log_c x},
\end{equation}
where $c > 1$ is a constant called the \emph{guardrail rank width}
(see Section~\ref{sec:choosing-c}).
The idea of guardrails is to
spread out the jobs within a rank~$r$ across the servers,
doing so separately for each rank~$r$.
To do so, for each rank~$r$ and each server~$s$, the dispatcher stores a \emph{guardrail work counter}~$G^r_s$.
When dispatching a job of size~$x$ to server~$s$,
the dispatcher increases $G^r_s$ by~$x$, with $r$ given by \eqref{eq:rank}.\footnote{%
  The dispatcher also occasionally decreases work counters,
  as explained in Section~\ref{sec:resets}.}
Guardrails are a set of constraints
which ensure that no two rank~$r$ work counters are ever too far apart.

\begin{definition}
  \label{def:guardrails}
  A dispatching policy \emph{satisfies guardrails with tightness~$g$}
  if at all times
  \begin{equation*}
    |G^r_s - G^r_{s'}| \leq gc^{r+1}
  \end{equation*}
  for all ranks~$r$ and all pairs of servers $s$ and~$s'$,
  where $c > 1$ is the same constant as in \eqref{eq:rank}.
  The tightness can be any constant $g \geq 1$.
\end{definition}

We sometimes say that a particular dispatching decision
satisfies (respectively, violates) guardrails
if it satisfies (respectively, violates)
the constraints imposed by Definition~\ref{def:guardrails}.

\subsubsection{Choosing the Guardrail Rank Width~$c$}
\label{sec:choosing-c}

The choice of $c$ in \eqref{eq:rank}
heavily affects the performance of policies satisfying guardrails.
\begin{itemize}
\item
  If $c$ is too large,
  then guardrails may not differentiate between jobs of different sizes.
\item
  If $c$ is too small,
  then guardrails may misguidedly differentiate between jobs of similar sizes.
  This could allow one server to receive multiple small jobs of different ranks
  while another receives none.
\end{itemize}
To balance this tradeoff, we set $c$ to be a function of load~$\rho$:
\begin{equation}
  \label{eqn:def-c}
  c = 1 + \frac{1}{1 + \ln \frac{1}{1-\rho}}.
\end{equation}
This particular value of~$c$ is chosen to enable
the heavy-traffic optimality proof
for any dispatching policy satisfying guardrails.

\subsection{Guarded Policies:
  How to Augment Dispatching Policies with Guardrails}

Guardrails as described in Definition~\ref{def:guardrails}
are a set of constraints on dispatching policies
that we will use to guarantee bounds on mean response time
(see Section~\ref{sec:theorems}).
However, the constraints alone do not give a complete dispatching policy.
To define a concrete dispatching policy satisfying guardrails,
we start with an arbitrary dispatching policy~\P/
and augment it to create a new policy, called \emph{Guarded-\P/} (\Gp/),
which satisfies guardrails.

Roughly speaking, \Gp/ tries to dispatch according to~\P/,
but if dispatching to \P/'s favorite server would violate guardrails,
\Gp/ considers \P/'s second-favorite server, and so on.
Below are guarded versions of some common dispatching policies:
\begin{itemize}
\item
  G-Random dispatches to a random server
  among those which satisfy guardrails.
\item
  G-LWL dispatches to the server with the least remaining work
  among those which satisfy guardrails.
\item
  Round-Robin (RR) can be seen as
  always dispatching to the server that has least recently
  received a job,
  so G-RR dispatches to the server that has least recently received a job
  among those which satisfy guardrails.
\end{itemize}

Given an arbitrary dispatching policy~\P/,
Algorithm~\ref{alg:guarded} formally defines \Gp/.
We assume that P is specified by procedure
\Call{Dispatch\textsuperscript{\P/}}{}
which, when passed a job of size~$x$ and a set of servers~$\mathcal{S}$,
returns a server in $\mathcal{S}$ to which P would dispatch
a job of size~$x$.
The key to Algorithm~\ref{alg:guarded}
is that instead of calling \Call{Dispatch\textsuperscript{\P/}}{}
with the set of all servers,
we pass it a restricted set of servers
$\mathcal{S}_{\mathrm{safe}} \subseteq \mathcal{S}$
such that dispatching to any server in $\mathcal{S}_{\mathrm{safe}}$
will satisfy guardrails.
$\mathcal{S}_{\mathrm{safe}}$ is never empty because $x \leq gc^{r + 1}$,
so $\mathcal{S}_{\mathrm{safe}}$ will always contain the server~$s'$
of minimal $G^r_{s'}$.

\begin{algorithm}[t]
  \caption{Guarded-P (\Gp/)}
  \label{alg:guarded}

  \renewcommand{\algorithmicrequire}{\textit{Given:}}
  \newcommand{\Given}{\Require}
  \renewcommand{\algorithmicensure}{\textit{Call when:}}
  \newcommand{\CallWhen}{\smallskip\Ensure}

  \newlength{\alignwidth}
  \newcommand{\Gets}[2]{\State \makebox[\alignwidth][l]{#1}$\,{}\gets{}\,$#2}

  \begin{algorithmic}
    \Given
      dispatching policy~\P/,
      tightness $g \geq 1$,
      set of servers~$\mathcal{S}$,
      and rank width $c = 1 + \frac{1}{1 + \ln \frac{1}{1-\rho}}$

    \CallWhen the system starts
    \Procedure{Initialize\textsuperscript{\Gp/}}{{}}
      \For{each rank~$r$ and each server~$s \in \mathcal{S}$}
        \settowidth{\alignwidth}{$G^r_s$}
        \Gets{$G^r_s$}{$0$}
      \EndFor
    \EndProcedure

    \CallWhen a job of size~$x$ arrives
    \Procedure{Dispatch\textsuperscript{\Gp/}}{$x$}
      \settowidth{\alignwidth}{$\mathcal{S}_{\mathrm{safe}}$}
      \Gets{$r$}{$\floor{\log_c x}$}
      \Gets{$G_{\min}$}{$\min_{s' \in \mathcal{S}} G^r_{s'}$}
      \Gets{$\mathcal{S}_{\mathrm{safe}}$}{%
        $\{s' \in \mathcal{S} \mid G^r_{s'} + x \leq G_{\min} + gc^{r + 1}\}$}
      \Gets{$s$}{$\Call{Dispatch\textsuperscript{\P/}}{x, \mathcal{S}_{\mathrm{safe}}}$}
      \Gets{$G^r_s$}{$G^r_s + x$}
      \State \Return $s$
    \EndProcedure

    \CallWhen server~$s$ becomes empty
    \Procedure{Reset\textsuperscript{\Gp/}}{$s$}
      \For{each rank~$r$}
        \settowidth{\alignwidth}{$G^r_s$}
        \Gets{$G^r_s$}{$\min_{s' \in \mathcal{S}} G^r_{s'}$}
      \EndFor
    \EndProcedure
  \end{algorithmic}
\end{algorithm}

Algorithm~\ref{alg:guarded} is phrased in terms of
for loops over all ranks~$r$.
While there are infinitely many ranks in theory,
it is simple to represent all of the work counters in finite space
by representing most of them implicitly.

\subsection{Resets}
\label{sec:resets}

Algorithm~\ref{alg:guarded} includes a procedure,
\Call{Reset\textsuperscript{\Gp/}}{},
which we have not yet explained.
As defined so far,
guardrails effectively spread out small jobs across the servers,
but they have an unfortunate side effect:
they sometimes prevent dispatches to empty servers.
This is because the work counters~$G^r_s$ as defined so far
depend only on the dispatching history,
not the current server state.

Because dispatching to empty servers is desirable,
we would like to ensure that dispatching to an empty server
never violates guardrails.
We accomplish this by having servers \emph{reset} whenever they becomes empty.
When a server~$s$ resets, for each rank~$r$,
we decrease $G^r_s$ to match the minimum among all rank~$r$ work counters.
Because all rank~$r$ jobs have size less than~$gc^{r+1}$,
by Definition~\ref{def:guardrails},
dispatching to a server that has just reset will never violate guardrails.

\section{Technical Summary}
\label{sec:technical-summary}

\subsection{System Model}

We will study a $k$-server load balancing system
with Poisson arrivals at rate $\lambda$ jobs per second
and job size distribution~$X$.
Our optimality results (Theorem~\ref{thm:guard-srpt})
assume that either
\begin{itemize}
\item $X$ has bounded maximum size or
\item the tail of $X$ has upper Matuszewska index\matuszewska{} less than~$-2$.
\end{itemize}
This disjunctive assumption is roughly equivalent to finite variance.
We adopt the convention that each of the $k$ servers serves jobs at speed $1/k$.
As a result, a job of size $x$ requires $kx$ service time to complete.
We have chosen to define the speed of a server this way because
we will later compare the $k$-server system with a single server system of speed 1,
and this convention allows us to directly apply standard results
on single server systems.
We define the system load $\rho$ for both a single-server system
and the $k$-server system by
\[ \rho = \lambda \E{X} < 1. \]
Load does not depend on $k$ because the total service rate of all
$k$ servers combined is 1.

Throughout, we assume that the dispatching policy is a guarded policy,
as defined in Algorithm~\ref{alg:guarded}.
We consider two different scheduling policies that might be used at the servers:
\begin{description}
\item[\Srpt/]
  The policy that serves the job of least remaining size.
\item[\Prio/]
  The preemptive class-based priority policy
  in which a job's class is its \emph{rank}, as defined by \eqref{eq:rank}.
  That is, a job of size~$x$ has rank $r = \floor{\log_c x}$,
  and \Prio/ serves the job of minimal rank.
  Within each rank, jobs are served FCFS.
\end{description}

\subsection{Theorem Overview}

Our overall goal is to prove, for any dispatching policy~\P/,
the asymptotic optimality of the policy Guarded-P (\Gp/)
with respect to mean response time, given \Srpt/ scheduling at the servers.
We refer to this joint dispatch/scheduling policy as \GpSrpt/.

Rather than studying \GpSrpt/ directly,
we instead bound mean response time under \GpPrio/.
By the optimality of \Srpt/ scheduling~\cite{schrage-srpt-optimal},
the mean response time under \GpPrio/
gives an upper bound on the mean response time under \GpSrpt/.

\begin{theorem}
  \label{thm:guard-bound}
  For any dispatching policy~\P/,
  consider the policy \Gp/ with tightness~$g$.
  The expected response time for a job of size~$x$ under \GpPrio/
  is bounded by
  \begin{equation*}
    \E{T(x)}^{\GpPrio/}
    \le \frac{\frac{\lambda}{2} \int^{c^{r+1}}_0 t^2f_X(t)dt}{(1-\rho_{c^r})(1-\rho_{c^{r+1}})} + \frac{(4c+2)gk\frac{c^{r+1}}{c-1} + kx}{(1-\rho_{c^r})},
  \end{equation*}
  where
  \begin{itemize}
  \item
    $f_X(\cdot)$ is the probability density function of $X$,
  \item
    $c$ is the guardrail rank width
  \item
    $r = \floor{\log_c x}$ is the rank of a job of size~$x$, and
  \item
    $\rho_y = \lambda \int_0^y t f_X(t)\,dt$ is the load due to jobs of size $\le y$.
  \end{itemize}
\end{theorem}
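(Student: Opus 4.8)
The plan is to tag a job $J$ of size $x$ and rank $r = \floor{\log_c x}$ running under \GpPrio/, use the priority structure to reduce the analysis to an (almost) single-server computation, and pay for the imperfect load balancing with the additive guardrail term. Since \Prio/ is preemptive priority by rank, $J$ is never delayed by, nor delays, any job of rank greater than $r$, so we may discard all such jobs and keep only the rank-$\le r$ arrivals; note that rank-$\le r$ jobs have size below $c^{r+1}$ and carry load $\rho_{c^{r+1}}$, while rank-$<r$ jobs have size below $c^r$ and carry load $\rho_{c^r}$. From $J$'s arrival until its departure the server $s$ that receives $J$ works without interruption on work of priority at least $J$'s --- rank-$<r$ jobs (present or arriving afterward), rank-$r$ jobs present when $J$ arrives (later rank-$r$ arrivals sit behind $J$, since ties break FCFS), and $J$ itself. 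Measuring everything in \emph{service-time units}, where a size-$t$ job needs $kt$ units on a speed-$1/k$ server, this gives the identity $T(x) = \rwork_s + kx + A_s(T(x))$, where $\rwork_s$ is the rank-$\le r$ relevant work at $s$ when $J$ arrives and $A_s(\tau)$ is the rank-$<r$ work dispatched to $s$ during a window of length $\tau$ after $J$ arrives.

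I would then close this recursion. Over any time window the guardrails bound the rank-$r'$ work dispatched to a single server by its fair $1/k$ share of the total rank-$r'$ arrivals plus $O(gc^{r'+1})$ --- two rank-$r'$ counters never differ by more than $gc^{r'+1}$, and a dispatch overshoots by at most the job's size, which is below $c^{r'+1}$ --- and summing the geometric series over $r' < r$ bounds the excess over the fair share by $O\bigl(gk\tfrac{c^{r+1}}{c-1}\bigr)$ service-time units. Taking expectations, with $T(x)$ a stopping time so that a Wald/PASTA argument gives $\E{A_s(T(x))}$ as $\rho_{c^r}\E{T(x)}$ up to that imbalance, and rearranging gives $\E{T(x)} \le \bigl(\E{\rwork_s} + kx + O(gk\tfrac{c^{r+1}}{c-1})\bigr)/(1-\rho_{c^r})$, which already accounts for the $1/(1-\rho_{c^r})$ factor and the $kx$ and guardrail terms of the claimed bound.

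It remains to bound $\E{\rwork_s}$ by $\tfrac{\lambda}{2}\int_0^{c^{r+1}} t^2 f_X(t)\,dt \big/ (1-\rho_{c^{r+1}})$ plus another $O\bigl(gk\tfrac{c^{r+1}}{c-1}\bigr)$; this is the crux. I would introduce a \emph{virtual work} process $\vwork$: the rank-$\le r$ work in a single speed-$1$ M/G/1 fed exactly the rank-$\le r$ arrivals, which is an ordinary M/G/1 workload of load $\rho_{c^{r+1}}$ and so has stationary mean $\tfrac{\lambda}{2}\int_0^{c^{r+1}} t^2 f_X(t)\,dt \big/ (1-\rho_{c^{r+1}})$. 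The argument then has two parts. (a) The guardrails together with the reset rule convert the per-rank counter balance of Definition~\ref{def:guardrails} into a per-rank \emph{work} balance: the rank-$r'$ relevant work at server $s$ exceeds a $1/k$ share of the system-wide rank-$r'$ work by at most $O(gc^{r'+1})$, using that the per-rank minimum counter is nondecreasing and that a server holds no work of any rank at the instant it resets. (b) The system-wide rank-$\le r$ work never exceeds $\vwork$ by more than $O\bigl(gk\tfrac{c^{r+1}}{c-1}\bigr)$: because \Prio/ keeps some server on rank-$\le r$ work whenever any is present, the aggregate drains at least as fast as $\vwork$ \emph{except} while some server sits idle, and when some server sits idle the guardrails and resets force the aggregate rank-$\le r$ work to already be within $O\bigl(gk\tfrac{c^{r+1}}{c-1}\bigr)$, so a Reich-type inequality keeps the aggregate within that much of $\vwork$ at all times. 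Combining (a) and (b) --- the factor $k$ from writing $\rwork_s$ in service-time units cancels against the $1/k$ share --- gives the desired bound on $\E{\rwork_s}$. Substituting it into the inequality above, merging the two $O\bigl(gk\tfrac{c^{r+1}}{c-1}\bigr)$ terms, and tracking the constants (the ``$+x$'' overshoot built into $\mathcal{S}_{\mathrm{safe}}$, a start-and-end imbalance, reset slack, FCFS-within-rank slack) into the coefficient $4c+2$ completes the proof.

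The main obstacle is parts (a) and (b): controlling the spread-out $k$-server relevant work by the centralized quantity $\vwork$. The guardrails guarantee only an \emph{approximate} balance, and the reset rule interacts delicately with the individual servers' work, completion, and counter histories; the key subtlety is showing that whenever a server sits idle the total relevant work is already within the guardrail bound, which is precisely what stops the $k$-server aggregate from falling arbitrarily far behind $\vwork$. The speed-$1/k$ convention must also be handled with care throughout, since a naive server-by-server M/G/1 comparison would lose a spurious factor of $k$.
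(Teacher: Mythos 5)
Your proposal takes essentially the same approach as the paper's proof, and the key steps correspond one-to-one to the paper's lemmas: your part~(a) is Lemma~\ref{lem:work-balance} (the per-server rank-$\le r$ work balance, with the geometric sum over ranks), your part~(b) is Lemma~\ref{lem:rel-work-bound} (the coupling argument showing the aggregate rank-$\le r$ work exceeds the centralized $\MgPrio/$ workload only by the guardrail slack, with the same two cases of ``some server idle'' vs.\ ``all servers busy''), and your recursion with $A_s(T(x))$ plus the Wald-type rearrangement is the paper's Lemma~\ref{lem:first-bound}, phrased there instead via the standard busy-period bound $\E{B_{<r}(Y)} = \E{Y}/(1-\rho_{c^r})$. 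The two framings --- a stopping-time/Wald identity versus a rank-$<r$ busy period initiated by the starting work --- are interchangeable here, and the constant bookkeeping you sketch is exactly how the paper assembles $(4c+2)$ from the $2gkc^{r+2}/(c-1)$ of the work-balance lemma and the $(2c+2)gkc^{r+1}/(c-1)$ future-dispatch imbalance.
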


We prove Theorem~\ref{thm:guard-bound} in Section~\ref{sec:guard-bound}.

Using the bound in Theorem~\ref{thm:guard-bound},
we show that the mean response time of the \GpPrio/ system
converges to that of a single-server \Srpt/ system
in the heavy-traffic limit.

\begin{theorem}
  \label{thm:guard-srpt}
  Consider a single-server \Srpt/ system whose single server is
  $k$ times as fast as each server in the load balancing system.
  For any dispatching policy~\P/,
  consider the policy \Gp/ with any constant tightness.
  Then for any size distribution~$X$ which is either
  (i)~bounded or
  (ii)~unbounded with tail having upper Matuszewska index\matuszewska{} less than~$-2$,
  the mean response times of \GpSrpt/, \GpPrio/, and (single-server) \Srpt/ converge
  as load approaches capacity:
  \[\lim_{\rho \to 1} \frac{\E{T}^{\GpSrpt/}}{\E{T}^{\Srpt/}} = \frac{\E{T}^{\GpPrio/}}{\E{T}^{\Srpt/}} = 1.\]
\end{theorem}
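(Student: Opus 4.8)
The plan is to leverage Theorem~\ref{thm:guard-bound} to control $\E{T}^{\GpPrio/}$ and then sandwich everything between the single-server \Srpt/ benchmark. Since \Srpt/ scheduling at each server is optimal for any fixed arrival stream~\cite{schrage-srpt-optimal}, we have $\E{T}^{\GpSrpt/} \le \E{T}^{\GpPrio/}$ trivially, and the pooled single-server \Srpt/ system (speed $1$) is a lower bound on \emph{any} $k$-server policy, so $\E{T}^{\Srpt/} \le \E{T}^{\GpSrpt/}$. Hence it suffices to show $\limsup_{\rho\to1} \E{T}^{\GpPrio/}/\E{T}^{\Srpt/} \le 1$; the reverse inequality is free, and the middle equality in the theorem statement then follows by squeezing.

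To do this, first I would integrate the size-conditional bound of Theorem~\ref{thm:guard-bound} against the job size density to get $\E{T}^{\GpPrio/} = \int_0^\infty \E{T(x)}^{\GpPrio/} f_X(x)\,dx$, and compare term by term against the known \Srpt/ integral formula, namely $\E{T}^{\Srpt/} = \int_0^\infty \bigl(\frac{\lambda}{2}\int_0^x t^2 f_X(t)\,dt/(1-\rho_x)^2 + x/(1-\rho_x)\bigr) f_X(x)\,dx$ (the standard single-server \Srpt/ mean response time, with server speed~$1$). The first term of the Theorem~\ref{thm:guard-bound} bound, $\frac{\lambda}{2}\int_0^{c^{r+1}} t^2 f_X(t)\,dt / \bigl((1-\rho_{c^r})(1-\rho_{c^{r+1}})\bigr)$, should be shown to be at most $(1+o(1))$ times the \Srpt/ first term: since $x \le c^{r+1} \le cx$ and $\rho_y$ is monotincreasing, the numerator inflates by a bounded factor and the denominators $(1-\rho_{c^r}), (1-\rho_{c^{r+1}})$ differ from $(1-\rho_x)$ by only the load contributed by sizes in a multiplicative band of width~$c$ around~$x$; as $\rho\to1$ we have $c\to1$, so this band shrinks and these denominators become asymptotically equivalent to $(1-\rho_x)$. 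The $kx/(1-\rho_{c^r})$ term similarly matches the $x/(1-\rho_x)$ term of \Srpt/ up to the speed factor $k$ (which cancels against the ``$k$ times as fast'' normalization) and the same denominator-equivalence. The genuinely new piece is the additive term $(4c+2)gk\frac{c^{r+1}}{c-1}/(1-\rho_{c^r})$, coming from the guardrail slack: here $\frac{1}{c-1} = 1 + \ln\frac{1}{1-\rho}$ by~\eqref{eqn:def-c}, so this term is of order $\frac{x \log\frac{1}{1-\rho}}{1-\rho_x}$ (times constants and the speed factor~$k$), and I must show its contribution to $\E{T}^{\GpPrio/}$ is $o(\E{T}^{\Srpt/})$.

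\textbf{The main obstacle} is precisely bounding this extra guardrail term after integration and showing it is asymptotically negligible relative to $\E{T}^{\Srpt/}$, which blows up like $\Theta(1/(1-\rho))$ or faster as $\rho\to1$. This is where the Matuszewska-index hypothesis enters: a tail with upper Matuszewska index less than $-2$ guarantees $\E{X^2}<\infty$ and, more importantly, controls how fast $\int_0^x t^2 f_X(t)\,dt$ and $1-\rho_x$ behave for large~$x$, so that $\int \frac{x}{1-\rho_x} f_X(x)\,dx$ is finite while the full $\E{T}^{\Srpt/}$ diverges — meaning the extra $\log\frac{1}{1-\rho}$ factor multiplying a \emph{convergent} integral is $o(1/(1-\rho))$. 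In the bounded-size case this is easier: $x$, $c^{r+1}$, and $\int_0^x t^2 f_X$ are all uniformly bounded, $1-\rho_x \ge 1-\rho$ for $x$ below the max size (roughly), so the extra term contributes $O\bigl(k\log\frac{1}{1-\rho}/(1-\rho)\bigr)$ whereas $\E{T}^{\Srpt/} = \Theta(1/(1-\rho)^2)$ for bounded distributions with an atom at the max, giving a ratio $\to 0$. I would handle the two cases separately, in each case splitting the size integral at a threshold (growing slowly with $1/(1-\rho)$) to isolate the heavy-traffic-dominant contributions, invoking the Matuszewska-index regular-variation estimates (deferred to the appendix referenced by \matuszewska) for the tail bounds, and finally concluding $\lim_{\rho\to1}\E{T}^{\GpPrio/}/\E{T}^{\Srpt/} = 1$, which combined with the sandwich inequalities yields the full chain of limits.
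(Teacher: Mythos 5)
Your high-level architecture is sound and matches the paper's: you correctly set up the sandwich $\E{T}^{\Srpt/} \le \E{T}^{\GpSrpt/} \le \E{T}^{\GpPrio/}$, reduce to showing $\limsup_{\rho\to1}\E{T}^{\GpPrio/}/\E{T}^{\Srpt/}\le 1$, feed Theorem~\ref{thm:guard-bound} into an integral over job sizes, and observe that the guardrail slack term contributes (after the change of variables $\rho_x\mapsto x$) something of order $\frac{1}{\lambda(c-1)}\ln\frac{1}{1-\rho}$, which the Matuszewska/bounded-tail hypothesis renders negligible. You also correctly relegate that last negligibility step to the Lin et al.\ machinery.

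However, there is a genuine gap in how you handle the leading term. You claim that the first term of Theorem~\ref{thm:guard-bound}, namely $\frac{\lambda}{2}\int_0^{c^{r+1}}t^2f_X/\bigl((1-\rho_{c^r})(1-\rho_{c^{r+1}})\bigr)$, is at most $(1+o(1))$ times the \Srpt/ (really \Psjf/) first term $\frac{\lambda}{2}\int_0^{x}t^2f_X/(1-\rho_x)^2$, because ``the band shrinks as $c\to1$.'' That pointwise intuition is not a proof and in fact is not uniform in~$x$: for a near-atom at some size~$a$ with $x < a < cx$, the ratio $\int_0^{cx}t^2f_X/\int_0^{x}t^2f_X$ and the ratio $(1-\rho_x)/(1-\rho_{cx})$ can both blow up, and they must be controlled simultaneously for all $x$ in the support before you can integrate. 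The paper avoids this trap entirely: it observes that the first term of Theorem~\ref{thm:guard-bound} is \emph{exactly} the first term of the single-server \Prio/ response-time formula, then invokes a separate, nontrivial lemma (Lemma~\ref{lem:prio-psjf}) proving $\E{T}^{\Prio/}\le (c+2\sqrt{c-1})\,\E{T}^{\Psjf/}$ \emph{in mean}, by a coupling-plus-delay argument with a three-case analysis by rank. That lemma is the missing ingredient your proposal implicitly assumes is free. Relatedly, the speed factor~$k$ in the $kx/(1-\rho_{c^r})$ term does not ``cancel against the normalization''; the single-server benchmark is already speed-normalized, so $(k-1)x/(1-\rho_{c^r})$ is a genuine extra term that must be swept into the same $O(\ln\frac{1}{1-\rho})$ error budget as the guardrail slack, not absorbed into the \Srpt/ term. (The formula you label as \Srpt/ is also the \Psjf/ formula; the paper inserts the additional step $\E{T(x)}^{\Psjf/}\le \E{T(x)}^{\Srpt/}+x/(1-\rho_x)$ to bridge that gap, which again contributes only to the logarithmic error term.) Without an analogue of Lemma~\ref{lem:prio-psjf}, the core comparison in your proposal does not go through.
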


We prove Theorem~\ref{thm:guard-srpt} in Section~\ref{sec:guard-srpt}.

Theorem~\ref{thm:guard-srpt} relates the mean response times
of \GpSrpt/ and single-server \Srpt/,
which has the optimal mean response time
among all single-server policies~\cite{schrage-srpt-optimal}.
But a single-server system can simulate a load balancing system
running any joint dispatching/scheduling policy~\ArbArb/.
As a result, the mean response time under single-server \Srpt/
is a lower bound for the mean response time under \ArbArb/.

Using that bound,
Theorem~\ref{thm:guard-srpt} implies the following relationship between
the mean response times of \GpSrpt/ and~\ArbArb/.

\begin{corollary}
  \label{cor:guard-opt}
    For any dispatching policy~\P/
    consider the policy \Gp/ with any constant tightness.
    Consider any joint dispatching/scheduling policy \ArbArb/.
    Then for any size distribution~$X$ which is either
    (i)~bounded or
    (ii)~unbounded with tail having upper Matuszewska index\matuszewska{} less than~$-2$,
    the mean response times of \GpSrpt/ and \GpPrio/
    are at least as small as the mean response time of \ArbArb/
    as load approaches capacity:
    \[\lim_{\rho \to 1} \frac{\E{T}^{\GpSrpt/}}{\E{T}^{\ArbArb/}} = \frac{\E{T}^{\GpPrio/}}{\E{T}^{\ArbArb/}} \le 1.\]
\end{corollary}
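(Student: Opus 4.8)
The plan is to obtain Corollary~\ref{cor:guard-opt} directly from Theorem~\ref{thm:guard-srpt} together with one uniform (in~$\rho$) lower bound: that single-server \Srpt/ --- the speed-$1$ server of the theorem, since it is $k$ times as fast as each of the $k$ servers of speed $1/k$ --- minimizes mean response time over \emph{every} $k$-server load balancing policy. Granting that bound, $\E{T}^{\Srpt/} \le \E{T}^{\ArbArb/}$ holds for all $\rho < 1$, so for each~$\rho$
\[
  \frac{\E{T}^{\GpSrpt/}}{\E{T}^{\ArbArb/}}
  = \frac{\E{T}^{\GpSrpt/}}{\E{T}^{\Srpt/}}\cdot\frac{\E{T}^{\Srpt/}}{\E{T}^{\ArbArb/}}
  \le \frac{\E{T}^{\GpSrpt/}}{\E{T}^{\Srpt/}},
\]
and the same with \GpPrio/ in place of \GpSrpt/. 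Theorem~\ref{thm:guard-srpt} says the right-hand side tends to~$1$ as $\rho\to1$ (this is where the boundedness / Matuszewska-index hypothesis on~$X$ is used), so $\limsup_{\rho\to1}\E{T}^{\GpSrpt/}/\E{T}^{\ArbArb/}\le1$ and likewise for \GpPrio/; since Theorem~\ref{thm:guard-srpt} also gives $\E{T}^{\GpSrpt/}/\E{T}^{\GpPrio/}\to1$, the two ratios are asymptotically equal, which is the claimed conclusion (read, as in Theorem~\ref{thm:guard-srpt}, at the level of the $\rho\to1$ limit). The case $\E{T}^{\ArbArb/}=\infty$ is trivial since the ratio is then~$0$, and when $\E{T}^{\ArbArb/}<\infty$ all the quantities above are finite and positive for $\rho<1$, so the algebra is legitimate.

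The substantive step is the lower bound $\E{T}^{\Srpt/}\le\E{T}^{\ArbArb/}$, which I would prove by the simulation argument sketched just before the corollary. Fix any joint dispatching/scheduling policy \ArbArb/ and any arrival sequence. In the $k$-server system each of the $k$ servers runs at speed~$1/k$, so at every instant \ArbArb/ devotes rate at most $1/k$ to each of at most $k$ distinct jobs, for an aggregate service rate of at most~$1$. A single server of speed~$1$ can reproduce this allocation exactly by time-sharing among the very same jobs at the very same instantaneous rates; on every sample path this yields identical release and completion times for each job, hence an identical response-time profile. Thus the response times achieved by \ArbArb/ are also achievable by \emph{some} single-server scheduling policy operating at speed~$1$, and by the optimality of \Srpt/ among single-server policies for arbitrary arrival sequences~\cite{schrage-srpt-optimal} we get $\E{T}^{\Srpt/}\le\E{T}^{\ArbArb/}$ for every~$\rho$.

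I expect this simulation step to be the main obstacle --- not for any deep reason, but because it must be phrased at the right level of generality: Schrage's optimality result must be invoked for the class of single-server policies that includes the ``fractional'' time-shared schedules produced by the simulation (and one should note \Srpt/ never idles while work is present, so it is genuinely minimal within this class), which may require recalling the precise statement of~\cite{schrage-srpt-optimal} or re-deriving the exchange argument in this setting. Once the lower bound is in place, the rest of the corollary is the one-line limit computation above, and it is worth remarking explicitly that, because \ArbArb/ ranges over \emph{all} joint policies --- including any mean-response-time-optimal one --- the corollary says \GpSrpt/ and \GpPrio/ are asymptotically optimal, not merely competitive with a fixed baseline.
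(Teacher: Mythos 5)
Your proposal is correct and takes essentially the same approach as the paper: emulate the $k$-server system on a single speed-$1$ server, invoke the optimality of \Srpt/ among single-server policies~\cite{schrage-srpt-optimal} to get $\E{T}^{\Srpt/} \le \E{T}^{\ArbArb/}$ for all $\rho<1$, then divide through and apply Theorem~\ref{thm:guard-srpt}. The paper states the emulation step in a single sentence; you merely spell it out in more detail.
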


\subsection{Relationship to Prior Work}

Our guardrails provide the first mechanism
to augment an arbitrary dispatching policy
to ensure size balance at all job size scales.
Moreover, we give the first bound
on the mean response time of load balancing systems
with \Srpt/ scheduling at the servers.
Using this bound, we prove that guarded dispatching policies
have asymptotically optimal mean response time
in the $\rho \to 1$ limit for any constant number of servers.
Our guarded policies are the first dispatching policies
known to have this property for general job size distributions.

We are not the first to consider load balancing systems
with \Srpt/ scheduling at the servers.
Avrahami and Azar~\cite{Avrahami:2003:MTF:777412.777415} consider
a problem analogous to ours but in a \emph{worst-case} setting,
assuming adversarial arrival times and job sizes,
in contrast to our stochastic setting.
Their dispatching policy, which they call IMD,
divides jobs into size ranks in a manner similar to our size ranks,
except with the width of each rank set to $c = 2$.
IMD dispatches each rank $r$ job to the server that has received the
least work of rank $r$ jobs in the past.
Put another way, IMD is the policy that keeps maximally tight guardrails
with no other underlying policy.
Avrahami and Azar prove that IMD is $O(\log P)$ competitive with
an optimal migratory offline algorithm,
where $P$ is the ratio between the largest and smallest job sizes in this system.
Note that $P$ can be arbitrarily large for general job size distributions.
Unfortunately,
the $O(\log P)$ competitive ratio is optimal for any online dispatching policy
in the worst-case setting~\cite{LEONARDI2007875}.
Our result is much stronger thanks to our stochastic setting.

Down and Wu~\cite{Down2006} also consider
a stochastic setting with \Srpt/ scheduling at the servers,
and they also propose a dispatching policy
that balances jobs of different sizes across the servers.
Their analysis does not result in any formula for mean response time
but instead uses a diffusion limit argument to show optimality in heavy traffic.
However, this limits their results to discrete job size distributions,
thus exlcuding many practically important continuous job size distributions.
In fact, Down and Wu~\cite[Section~5]{Down2006} observe empirically that
their policy performs poorly for Bounded Pareto job size distributions.
In contrast, our analysis shows that any guarded dispatching policy
is heavy-traffic optimal for general job size distributions,
including Bounded Pareto (see Figure~\ref{fig:bp-many-pol-g}).
Finally, the Down and Wu~\cite{Down2006} result
provides no insight into mean response time
outside of the heavy traffic regime,
whereas we derive a mean response time bound that is valid for all loads.

\section{Analysis of Guarded Policies}
\label{sec:theorems}

In this section, we analytically bound
the mean response time of a load balancing system
using an arbitrary guarded dispatching policy \Gp/ paired with \Srpt/ scheduling.
We then show that our bound implies that \GpSrpt/ minimizes mean response time
in heavy traffic.

\subsection{Preliminaries and Notation}
\label{sec:preliminaries}

We use a tagged job analysis:
we analyze the expected response time of a particular ``tagged'' job,
which we call~$j$, arriving to a steady-state system.
The expected response time of~$j$ is equal to the system's mean response time
by the PASTA property~\cite{wolff-pasta}.

Instead of studying \GpSrpt/ directly,
we analyze \GpPrio/,
which yields an upper bound on the mean response time under \GpSrpt/.
Studying \Prio/ simplifies the analysis because
the priority classes of \Prio/ match the ranks used by guardrails.

Suppose that $j$ has rank~$r$ and is dispatched to server~$s$.
Under \Prio/ scheduling,
there are two types of work that might delay job~$j$:
\begin{itemize}
\item
  The \emph{current relevant work} at server~$s$ when $j$ arrives.
  This is the total amount of remaining work at server~$s$
  due to jobs of rank~$\le r$.
\item
  The \emph{future relevant work} due to
  arriving jobs dispatched to server~$s$ while $j$ is in the system.
  These are the jobs dispatched to~$s$ of rank~$< r$
  (that is, rank~$\le r - 1$).
\end{itemize}

We use the following notation,
where ``rank~$r$ work'' denotes work due to jobs of rank~$r$.
\begin{itemize}
\item $W_s^r(t)$ denotes the current amount of rank~$r$ work
  at server~$s$ at time~$t$.
\item $V_s^r(t)$ denotes the total amount of rank~$r$ work
  that has ever been dispatched to server~$s$ up to time~$t$.
  In particular, the amount of rank~$r$ work dispatched to~$s$
  during a time interval $(t_1, t_2)$ is $V_s^r(t_2) - V_s^r(t_1)$.
\item $G_s^r(t)$ denotes the rank~$r$ guardrail work counter
  for server~$s$ at time~$t$ (see Algorithm~\ref{alg:guarded}).
  Specifically, $G_s^r(t)$ is defined as follows:\footnote{%
    The notations $t^-$ and $t^+$ below
    refer to ``just before'' and ``just after'' time~$t$.
    More formally, they refer to the left and right limits, respectively,
    of an expression that is piecewise-continuous in~$t$.}
  \begin{itemize}
  \item If a rank $r$ job of size $x$ is dispatched to server $s$ at time~$t$,
    we set $G_s^r(t^+) = G_s^r(t^-) + x$.
  \item If a server $s$ becomes empty of all jobs at time~$t$,
    we set $G_s^r(t^+) = \min_{s'} G_{s'}^r(t^-)$,
    where $s'$ ranges over all servers.
    We call this a \emph{reset} of server~$s$.
  \item Otherwise, $G_s^r(t)$ does not change.
  \end{itemize}
\end{itemize}
We write $W_s^{\le r}(t)$, $V_s^{\le r}(t)$, and $G_s^{\le r}(t)$
to denote the corresponding quantities where we consider all ranks~$\le r$,
rather than just rank~$r$,
and similarly for superscript~$< r$.

Occasionally, we will be talking about the \emph{total} work in the system,
or the \emph{total} work that has arrived, summed over all servers.
In that case, we will drop the subscript~$s$, writing $W^{\le r}(t)$ or $V^{\le r}(t)$.
Finally, we write $W^{\le r}$ to denote the stationary distribution of
the amount of rank~$\le r$ work in the whole system.

\subsection{Bounding Response Time: Key Steps}

Our goal in this section is to bound
the expected response time of a tagged job~$j$ under \GpPrio/.
We assume that $j$ has size~$x$ and rank $r = \floor{\log_cx}$.
We first bound current relevant work,
then move on to bound future relevant work.

We begin by showing that guardrails ensure that
any two servers have a similar amount of remaining rank~$\le r$ work.

\begin{lemma}
  \label{lem:work-balance}
  For any dispatching policy~\P/,
  consider the dispatching policy \Gp/ with tightness~$g$.
  In a \GpPrio/ system,
  the difference in remaining rank~$\le r$ work
  between any two servers $s$ and~$s'$ at any time~$t$
  is bounded by
  \begin{equation*}
    W_s^{\le r}(t) - W_{s'}^{\le r}(t) \le \frac{2gc^{r+2}}{c-1},
  \end{equation*}
  where $c$ is the guardrail rank width.
\end{lemma}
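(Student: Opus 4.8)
The plan is to relate the remaining rank-$\le r$ work at a server to the two quantities we can control: the total amount of rank-$\le r$ work ever dispatched there, $V_s^{\le r}(t)$, and the guardrail work counters $G_s^{\le r}(t)$. The guardrails of Definition~\ref{def:guardrails} directly control the $G$-counters, but what we actually care about is $W_s^{\le r}(t)$, so the argument has two conceptual bridges to cross: from $G$ to $V$, and from $V$ to $W$.

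First I would bound $|G_s^{\le r}(t) - G_{s'}^{\le r}(t)|$. Summing Definition~\ref{def:guardrails} over all ranks $r' \le r$ gives $|G_s^{r'} - G_{s'}^{r'}| \le gc^{r'+1}$ for each $r'$, and the geometric series $\sum_{r' \le r} c^{r'+1} = c^{r+1}\sum_{i\ge 0} c^{-i} = \frac{c^{r+2}}{c-1}$ yields $|G_s^{\le r}(t) - G_{s'}^{\le r}(t)| \le \frac{gc^{r+2}}{c-1}$. This already has the right shape — it is half of the claimed bound, which suggests the remaining error from the $G$-to-$W$ passage contributes the other factor of two.

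Next I would connect the $G$-counters to actual dispatched work $V_s^{\le r}$. These differ only because of resets: every dispatch increments both $G_s^r$ and $V_s^r$ by the job size, so between resets $V_s^{\le r}(t) - V_s^{\le r}(t_0) = G_s^{\le r}(t) - G_s^{\le r}(t_0)$; a reset of server $s$ at time $t$ sets $G_s^{r'}(t^+) = \min_{s''} G_{s''}^{r'}(t^-)$, which only decreases $G_s^{\le r}$ (never $V_s^{\le r}$), and crucially a reset happens exactly when $s$ empties, i.e. when $W_s^{\le r}(t^-) = 0$. So I would track the quantity $V_s^{\le r}(t) - W_s^{\le r}(t)$, which is the total rank-$\le r$ work that has departed from $s$; this is non-decreasing in $t$ and, because $\Prio/$ gives strict priority to ranks $\le r$, it grows at the full server rate $1/k$ whenever $W_s^{\le r}(t) > 0$. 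The point is that $W_s^{\le r}$ behaves like the workload of a work-conserving single queue fed by arrival stream $V_s^{\le r}$, so $W_s^{\le r}(t) = V_s^{\le r}(t) - (\text{busy-period drainage})$, and the drainage term is what lets us compare two servers.

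The main obstacle — and where I would spend the most care — is handling the resets cleanly, since a reset decouples $G_s^{\le r}$ from $V_s^{\le r}$ and one must argue that this decoupling never hurts the comparison. The clean way is to fix the two servers $s, s'$ and a time $t$, let $t_0 \le t$ be the last time at or before $t$ that $s$ was empty of rank-$\le r$ work (so $W_s^{\le r}(t_0) = 0$; if no such time exists, use $t_0 = 0$, where $W_s^{\le r}(0)=0$ as well), and observe that on $(t_0, t]$ server $s$ is continuously busy on rank-$\le r$ work, so $W_s^{\le r}(t) = (V_s^{\le r}(t) - V_s^{\le r}(t_0)) - \frac{t - t_0}{k}$. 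Meanwhile $W_{s'}^{\le r}(t) \ge (V_{s'}^{\le r}(t) - V_{s'}^{\le r}(t_0)) - \frac{t-t_0}{k}$ trivially, since $s'$ drains at rate at most $1/k$. Subtracting, the $\frac{t-t_0}{k}$ terms cancel and I am left with $W_s^{\le r}(t) - W_{s'}^{\le r}(t) \le (V_s^{\le r}(t) - V_s^{\le r}(t_0)) - (V_{s'}^{\le r}(t) - V_{s'}^{\le r}(t_0))$, a difference of dispatched-work increments over $(t_0, t]$. Finally I relate each $V$-increment to the corresponding $G$-increment: on $(t_0,t]$, if $s$ itself does not reset, $V_s^{\le r}(t)-V_s^{\le r}(t_0) = G_s^{\le r}(t)-G_s^{\le r}(t_0)$ exactly (we chose $t_0$ so that any reset of $s$ happens only at $t_0$ itself, absorbed into $G_s^{\le r}(t_0^+)$), while $V_{s'}^{\le r}(t)-V_{s'}^{\le r}(t_0) \ge G_{s'}^{\le r}(t)-G_{s'}^{\le r}(t_0)$ because resets of $s'$ only decrease $G_{s'}^{\le r}$. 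Plugging in and using $|G_s^{\le r} - G_{s'}^{\le r}| \le \frac{gc^{r+2}}{c-1}$ at both $t_0$ and $t$, the difference of differences is at most $2 \cdot \frac{gc^{r+2}}{c-1}$, giving the claimed bound $\frac{2gc^{r+2}}{c-1}$.
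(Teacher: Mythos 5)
Your proof is correct and follows essentially the same route as the paper's: you pick the same $t_0$ (last time $s$ is empty of rank-$\le r$ work), use work conservation to get the exact expression $W_s^{\le r}(t) = (V_s^{\le r}(t)-V_s^{\le r}(t_0)) - \frac{t-t_0}{k}$ and the corresponding lower bound for $s'$, convert the $V$-increments to $G$-increments by noting resets only decrease $G_{s'}$ and that $s$ never resets on $(t_0,t]$, and then apply the guardrail constraint summed over ranks $\le r$ at both $t_0$ and $t$ to obtain the factor of two. The only cosmetic difference is that you precompute the bound $|G_s^{\le r}-G_{s'}^{\le r}| \le \frac{gc^{r+2}}{c-1}$ up front, whereas the paper sums over ranks at the end; the substance is identical.
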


We prove Lemma~\ref{lem:work-balance} in Section~\ref{sec:work-balance}.

Roughly speaking,
Lemma~\ref{lem:work-balance} shows that guarded policies do a good job
of spreading out rank~$\le r$ work across the servers.
This is important because if the rank~$\le r$ work is spread out well,
then whenever there is a large amount of rank~$\le r$ work in the system,
\emph{all the servers} are doing rank~$\le r$ work.
This allows us to bound the amount of rank~$\le r$ work
in the $k$-server \GpPrio/ system
in terms of the remaining rank~$\le r$ work in an \MgPrio/ system
with a single server that runs $k$ times as fast.

\begin{lemma}
  \label{lem:rel-work-bound}
  For any dispatching policy~\P/,
  consider the dispatching policy \Gp/ with tightness~$g$.
  The total amount of remaining rank~$\le r$ work
  in a \GpPrio/ system
  is stochastically bounded relative to the remaining rank~$\le r$ work
  in a \MgPrio/ system
  whose server runs $k$ times as fast:
  \[W^{\le r} \le_\st W^{\le r}_{\MgPrio/} + \frac{2gkc^{r+2}}{c-1},\]
  where $c$ is the guardrail rank width.
\end{lemma}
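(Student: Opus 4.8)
The plan is to establish the stochastic inequality by a sample-path coupling. Couple the $k$-server \GpPrio/ system with the single-server \MgPrio/ system (whose server runs at speed~$1$, i.e.\ $k$ times as fast as each \GpPrio/ server) so that both see the same Poisson arrival stream with identical job sizes, and start both empty at time~$0$. Because \Prio/ gives strict preemptive priority to lower ranks, in the \MgPrio/ system the rank~$\le r$ work $W^{\le r}_{\MgPrio/}(t)$ decouples from the higher-rank jobs and evolves exactly as the workload of an M/G/1 queue fed only by the rank~$\le r$ jobs at service rate~$1$: it jumps up by~$x$ at each rank~$\le r$ arrival of size~$x$ and decreases at rate~$1$ while positive. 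In the \GpPrio/ system, a server running \Prio/ always works on a rank~$\le r$ job when it has one, so $W^{\le r}(t)$ has the same upward jumps and, between jumps, decreases at rate $\frac{1}{k}\,\bigl|\{s : W^{\le r}_s(t) > 0\}\bigr| \le 1$; note that resets alter only the counters $G^r_s$, not the work $W^r_s$, so they leave $W^{\le r}$ untouched.

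Next I would turn Lemma~\ref{lem:work-balance} into a statement about when the $k$-server system falls behind the single fast server. Write $B = \frac{2gkc^{r+2}}{c-1}$. Lemma~\ref{lem:work-balance} gives $W^{\le r}_s(t) \le \min_{s'} W^{\le r}_{s'}(t) + \frac{2gc^{r+2}}{c-1}$ for every server~$s$; summing over the $k$ servers shows that whenever \emph{some} server is empty of rank~$\le r$ work we have $W^{\le r}(t) \le B$. Equivalently, whenever $W^{\le r}(t) > B$ every server has rank~$\le r$ work, so $W^{\le r}(t)$ is then decreasing at the full rate~$1$ --- exactly matching the fast single server.

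The core step is then a drift argument on $D(t) = W^{\le r}(t) - W^{\le r}_{\MgPrio/}(t)$. Since the two processes have identical upward jumps, $D$ has continuous, piecewise-linear sample paths with $D(0) = 0 \le B$. If $D(t_0) > B$ for some $t_0$, let $t_1 = \sup\{t \le t_0 : D(t) \le B\}$, so $D(t_1) = B$ and $D > B$ on $(t_1, t_0]$. On that interval $W^{\le r}(t) \ge D(t) > B$, so by the previous paragraph $W^{\le r}$ decreases at rate~$1$ while $W^{\le r}_{\MgPrio/}$ decreases at rate at most~$1$; hence $D$ is nonincreasing on $(t_1, t_0]$, contradicting $D(t_0) > D(t_1)$. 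Thus $D(t) \le B$, i.e.\ $W^{\le r}(t) \le W^{\le r}_{\MgPrio/}(t) + B$, for all~$t$. Letting $t \to \infty$ --- both systems are positive recurrent since $\rho < 1$ and hence converge to their stationary distributions --- yields $W^{\le r} \le_\st W^{\le r}_{\MgPrio/} + B$, which is the claim.

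I expect the main obstacle to be the care needed in the sample-path argument rather than any single hard estimate: justifying that the rank~$\le r$ workload in \MgPrio/ is an autonomous M/G/1 workload process (this is exactly where preemptive priority is used), checking that server resets do not perturb $W^{\le r}$, handling the piecewise-linear drift cleanly across the countably many event times, and transferring the pathwise inequality under the empty-start coupling to a comparison of the stationary distributions.
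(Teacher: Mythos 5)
Your proof is correct and follows essentially the same strategy as the paper: couple the two systems, use Lemma~\ref{lem:work-balance} to show $W^{\le r}(t) \le B$ whenever some server is free of rank~$\le r$ work, and observe that otherwise the $k$-server system drains rank~$\le r$ work at full rate~$1$ so the difference cannot grow. The paper phrases this as a two-case comparison against the most recent idle time $t_0$, while you package it as a drift argument on $D(t)$; these are equivalent, and your extra remarks (jumps cancel, resets do not touch $W$, passage to stationarity) are details the paper leaves implicit.
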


We prove Lemma~\ref{lem:rel-work-bound} in Section~\ref{sec:rel-work-bound}.

Combining Lemmas~\ref{lem:work-balance} and~\ref{lem:rel-work-bound}
yields a bound on the amount of remaining rank~$\le r$ work
at any server~$s$,
thus bounding the current relevant work.

We now turn to bounding future relevant work.
Suppose that the tagged job~$j$ is dispatched to server~$s$.
The fact that guardrails spread out relevant work across the servers
means that while $j$ is in the system
$s$ will not receive much more rank~$< r$ work than other servers,
thus bounding future relevant work.
Combining this with our bound on current relevant work
yields the following bound on $j$'s response time.

\begin{lemma}
  \label{lem:first-bound}
  For any dispatching policy~\P/,
  consider the dispatching policy \Gp/ with tightness~$g$.
  In a \GpPrio/ system,
  the response time of a job of size~$x$
  is stochastically bounded by
  \[T(x) \le_\st B_{< r}\biggl(W^{\le r}_{\MgPrio/} + \frac{(4c+2)gkc^{r+1}}{c-1} + kx\biggr),\]
  where $c$ is the guardrail rank width,
  $r = \lfloor \log_c x \rfloor$ is the rank of the job,
  and $B_{<r}(w)$ is the length of a busy period comprising
  only jobs of rank~$< r$ started by work~$w$.
\end{lemma}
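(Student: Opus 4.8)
The plan is to combine the current-relevant-work bound with a bound on future relevant work, then express the response time of the tagged job~$j$ via a busy-period argument. First I would fix the tagged job~$j$ of size~$x$ and rank $r = \lfloor \log_c x \rfloor$, dispatched to some server~$s$, and recall from the preliminaries that the only work that can delay~$j$ under \Prio/ scheduling is (a) the current relevant work $W_s^{\le r}$ at~$s$ when~$j$ arrives, (b) $j$'s own service requirement $kx$, and (c) the future relevant work, i.e.\ rank-$<r$ work dispatched to~$s$ while~$j$ is still in the system. The response time of~$j$ is then exactly the length of a busy period of rank-$<r$ work (the work that can preempt~$j$ plus~$j$ itself, with $j$ at the bottom of the rank-$r$ priority level) started by the initial amount (a)~$+$~$kx$ plus whatever rank-$<r$ work arrives to~$s$ during that period. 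So the main task is to bound the initial ``seed'' of this busy period by $W^{\le r}_{\MgPrio/} + \frac{(4c+2)gkc^{r+1}}{c-1} + kx$ and to argue that the future arrivals are already accounted for inside $B_{<r}(\cdot)$.

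Next I would bound the current relevant work $W_s^{\le r}$. By Lemma~\ref{lem:work-balance}, at the arrival instant of~$j$ we have $W_s^{\le r} \le W_{s'}^{\le r} + \frac{2gc^{r+2}}{c-1}$ for every server~$s'$; averaging over all~$k$ servers gives $W_s^{\le r} \le \frac{1}{k} W^{\le r} + \frac{2gc^{r+2}}{c-1}$, where $W^{\le r}$ is the total rank-$\le r$ work in the whole $k$-server system. Then Lemma~\ref{lem:rel-work-bound} gives $W^{\le r} \le_\st W^{\le r}_{\MgPrio/} + \frac{2gkc^{r+2}}{c-1}$, so $W_s^{\le r} \le_\st \frac{1}{k} W^{\le r}_{\MgPrio/} + \frac{4gc^{r+2}}{c-1}$. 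I should be careful here: $B_{<r}$ is measured in service time (the $k$-server system runs each server at speed $1/k$), so the relevant seed is $k$ times the ``work'' quantity; multiplying through by the speed factor and simplifying $\frac{c^{r+2}}{c-1} = c \cdot \frac{c^{r+1}}{c-1}$ converts $\frac{4gkc^{r+2}}{c-1}$ into part of the $\frac{(4c+2)gkc^{r+1}}{c-1}$ term. (The PASTA property ensures that the stationary bound on $W^{\le r}$ applies at~$j$'s arrival.)

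The remaining piece is the future relevant work. While~$j$ is in the system, server~$s$ receives rank-$<r$ jobs, and I want to show that guardrails limit this to only slightly more than what a ``fair share'' would be --- the excess over the $k$-server \MgPrio/ baseline is at most something like $\frac{2gkc^{r+1}}{c-1}$ in service time, which supplies the remaining $\frac{2gkc^{r+1}}{c-1}$ of the $\frac{(4c+2)gkc^{r+1}}{c-1}$ coefficient. The clean way to do this is to fold all future rank-$<r$ arrivals to the whole system into the definition of the busy period $B_{<r}(w)$ (which by definition is driven by the system-wide rank-$<r$ arrival process at rate $\rho_{c^r}$ relative to a speed-1 server), and separately bound only the \emph{discrepancy} between what~$s$ receives and its $1/k$ share using the guardrail constraint of Definition~\ref{def:guardrails} together with a reset/counter-tracking argument analogous to the proof of Lemma~\ref{lem:work-balance}. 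I expect this future-work bookkeeping to be the main obstacle: one has to carefully relate the per-server dispatched rank-$<r$ work $V_s^{<r}$ to the guardrail counters $G_s^{<r}$ (they differ because of resets and because a rank has width $c^{r+1}$ in size but the counter can lag by up to $gc^{r+1}$ per rank, summed as a geometric series over ranks $<r$ to give the $\frac{c^{r+1}}{c-1}$ factor), and then show that a server can only accumulate a bounded surplus of rank-$<r$ work relative to the least-loaded server before guardrails force dispatches elsewhere. Once the seed of the busy period is bounded by $\frac{1}{k}W^{\le r}_{\MgPrio/} + \frac{(4c+2)gc^{r+1}}{c-1} + x$ in work units, equivalently $W^{\le r}_{\MgPrio/} + \frac{(4c+2)gkc^{r+1}}{c-1} + kx$ in service-time units, monotonicity of busy-period length in its initial work gives the stated stochastic bound $T(x) \le_\st B_{<r}\bigl(W^{\le r}_{\MgPrio/} + \frac{(4c+2)gkc^{r+1}}{c-1} + kx\bigr)$.
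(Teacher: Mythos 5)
Your proposal follows essentially the same route as the paper: decompose the delay into current relevant work at~$s$, the job's own size, and future rank-$<r$ arrivals to~$s$; control the per-server deviation from the $k$-server average using Lemma~\ref{lem:work-balance} (for current work) and the guardrail constraints on the counters (for future dispatches); fold system-wide future rank-$<r$ arrivals into the busy period; and finally replace the system-wide relevant work $W^{\le r}$ via Lemma~\ref{lem:rel-work-bound}. Your constant bookkeeping is consistent with the paper's ($4c$ from current work and Lemma~\ref{lem:rel-work-bound}, $2$ from the future-dispatch discrepancy), just grouped differently. The one place you correctly flag as the ``main obstacle'' and leave unexecuted --- relating $V_s^{<r}$ to $G_s^{<r}$ --- is handled in the paper by the observation that server~$s$ cannot reset while $j$ is in residence (so $V_s^{<r}$ and $G_s^{<r}$ move in lockstep, giving equality in \eqref{eq:v-vs-g-s}), whereas an arbitrary $s'$ may reset (giving the one-sided inequality in \eqref{eq:v-vs-g-s'}); this asymmetry is exactly what makes the guardrail bound usable for the future-work term. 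One small stylistic caution: the paper derives a deterministic pathwise bound $d_j - a_j \le B_{<r}\bigl(W^{\le r}(a_j) + \cdots\bigr)$ first and only then applies PASTA and Lemma~\ref{lem:rel-work-bound} at the very end; applying the stochastic bound on $W^{\le r}$ ``early'' to $W_s^{\le r}$, as you sketch, requires a bit more care to keep the coupling with the future-work argument on the same sample path, though it can be made to work.
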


We prove Lemma~\ref{lem:first-bound} in Section~\ref{sec:first-bound}.

Taking expectations in Lemma~\ref{lem:first-bound}
and applying the well-known formula for $\E{W^{\le r}_{\MgPrio/}}$,
we obtain Theorem~\ref{thm:guard-bound}.

\begin{reptheorem}{thm:guard-bound}
  For any dispatching policy~\P/,
  consider the policy \Gp/ with tightness~$g$.
  The expected response time for a job of size~$x$ under \GpPrio/
  is bounded by
  \begin{equation*}
    \E{T(x)}^{\GpPrio/}
    \le \frac{\frac{\lambda}{2} \int^{c^{r+1}}_0 t^2f_X(t)dt}{(1-\rho_{c^r})(1-\rho_{c^{r+1}})} + \frac{(4c+2)gk\frac{c^{r+1}}{c-1} + kx}{(1-\rho_{c^r})},
  \end{equation*}
  where
  \begin{itemize}
  \item
    $f_X(\cdot)$ is the probability density function of $X$,
  \item
    $c$ is the guardrail rank width
  \item
    $r = \floor{\log_c x}$ is the rank of a job of size~$x$, and
  \item
    $\rho_y = \lambda \int_0^y t f_X(t)\,dt$ is the load due to jobs of size $\le y$.
  \end{itemize}
\end{reptheorem}

We prove Theorem~\ref{thm:guard-bound} in Section~\ref{sec:guard-bound-proof}.

\subsection{Bounding Response Time: Proofs}
\label{sec:guard-bound}

\subsubsection{Proof of Lemma~\ref{lem:work-balance}}
\label{sec:work-balance}

\begin{replemma}{lem:work-balance}
  For any dispatching policy~\P/,
  consider the dispatching policy \Gp/ with tightness~$g$.
  In a \GpPrio/ system,
  the difference in remaining rank~$\le r$ work
  between any two servers $s$ and~$s'$ at any time~$t$
  is bounded by
  \begin{equation*}
    W_s^{\le r}(t) - W_{s'}^{\le r}(t) \le \frac{2gc^{r+2}}{c-1},
  \end{equation*}
  where $c$ is the guardrail rank width.
\end{replemma}

\begin{proof}
  Let $t_0$ be the most recent time up to time $t$ when server $s$ was empty of
  rank~$\le r$ work. Note that $t_0$ may equal $t$.
  We will bound the difference in rank $\le r$ work at the two servers at time $t$
  by comparison with time $t_0$.

  The remaining rank~$\le r$ work present at time $t$ is
  \begin{enumerate}[(i)]
  \item
    the remaining rank~$\le r$ work present at time $t_0$
  \item
    plus rank~$\le r$ work due to arrivals in the interval $[t_0, t]$
  \item
    minus rank $\le r$ work processed during the interval.
  \end{enumerate}
  We consider these three quantities first for server~$s$,
  then for server~$s'$.

  We begin with server~$s$:
  \begin{enumerate}[(i)]
  \item
    By the definition of $t_0$, there is no remaining rank~$\le r$ work at server $s$.
  \item
    The amount of work that arrives to server $s$ over the interval $[t_0, t]$
    is $V_s^{\le r}(t) - V_s^{\le r}(t_0)$.
  \item
    The amount of rank~$\le r$ work processed during the interval $[t_0, t]$
    is equal to $\frac{t-t_0}{k}$,
    because the server $s$ processes work at speed $1/k$,
    $s$ has rank~$\le r$ work available throughout the interval,
    and the \Prio/ scheduling policy always prioritizes lower rank work.
  \end{enumerate}
  These quantities give us
  the remaining rank~$\le r$ work at server $s$ at time $t$:
  \begin{equation}
    \label{eqn:s-1}
    W_s^{\le r}(t) = (V_s^{\le r}(t) - V_s^{\le r}(t_0)) - \frac{t - t_0}{k}.
  \end{equation}
  Because server $s$ was never empty at any time during $(t_0, t]$,
  the guardrail work counters were never reset to the system-wide minimums
  during $[t_0, t]$. As a result, the changes in $G_s^{\le r}(\cdot)$
  and $V_s^{\le r}(\cdot)$ over the interval $[t_0, t]$ must be equal.
  We can apply this fact to~(\ref{eqn:s-1}):
  \begin{equation}
    \label{eqn:s-2}
    W_s^{\le r}(t) = (G_s^{\le r}(t) - G_s^{\le r}(t_0)) - \frac{t - t_0}{k}.
  \end{equation}

  We now turn to server $s'$:
  \begin{enumerate}[(i)]
  \item
    The remaining rank~$\le r$ work present at server $s'$ at time $t_0$
    is non-negative.
  \item
    The amount of work that arrives to server $s$ over the interval $[t_0, t]$
    is $V_{s'}^{\le r}(t) - V_{s'}^{\le r}(t_0)$.
  \item
    The amount of rank $\le r$ work processed over the interval $[t_0, t]$ is at most
    $\frac{t-t_0}{k}$.
  \end{enumerate}
  Therefore, we may lower bound the remaining rank~$\le r$ work at server $s'$
  at time $t$:
  \begin{equation}
    \label{eqn:s'-1}
    W_{s'}^{\le r}(t) \ge (V_{s'}^{\le r}(t) - V_{s'}^{\le r}(t_0)) - \frac{t - t_0}{k}.
  \end{equation}
  The change in $G_{s'}^{\le r}(\cdot)$ over the interval $[t_0, t]$
  is no more than the change in $V_{s'}^{\le r}(\cdot)$
  over the same interval, since any reset to the system-wide minimum can
  only lead to a decrease in $G_{s'}^{\le r}(t)$.
  We can apply this fact to~(\ref{eqn:s'-1}):
  \begin{equation}
    \label{eqn:s'-2}
    W_{s'}^{\le r}(t) \ge (G_{s'}^{\le r}(t) - G_{s'}^{\le r}(t_0)) - \frac{t - t_0}{k}.
  \end{equation}

  Combining (\ref{eqn:s-2}), (\ref{eqn:s'-2}), and the guardrail constraint
  in Definition~\ref{def:guardrails}
  yields the desired bound:
  \begin{align*}
    \ifwidecol{}{\MoveEqLeft} W_s^{\le r}(t) - W_{s'}^{\le r}(t) \ifwidecol{}{\\}
    &\le (G_s^{\le r}(t) - G_s^{\le r}(t_0)) - (G_{s'}^{\le r}(t) - G_{s'}^{\le r}(t_0))\\
    &\le |G_s^{\le r}(t) - G_{s'}^{\le r}(t)| + |G_{s}^{\le r}(t_0) - G_{s'}^{\le r}(t_0)|\\
    &= \sum_{q = -\infty}^r (|G_s^q(t) - G_{s'}^q(t)| + |G_{s'}^q(t_0) - G_s^q(t_0)|)\\
    &\le \sum\limits_{q = -\infty}^r 2gc^{q+1}\\
    &= \frac{2gc^{r+2}}{c-1}.
    \qedhere
  \end{align*}
\end{proof}

\subsubsection{Proof of Lemma~\ref{lem:rel-work-bound}}
\label{sec:rel-work-bound}

\begin{replemma}{lem:rel-work-bound}
  For any dispatching policy~\P/,
  consider the dispatching policy \Gp/ with tightness~$g$.
  The total amount of remaining rank~$\le r$ work
  in a \GpPrio/ system
  is stochastically bounded relative to the remaining rank~$\le r$ work
  in a \MgPrio/ system
  whose server runs $k$ times as fast:
  \[W^{\le r} \le_\st W^{\le r}_{\MgPrio/} + \frac{2gkc^{r+2}}{c-1},\]
  where $c$ is the guardrail rank width.
\end{replemma}

\begin{proof}
We consider two coupled systems receiving the same arrivals:
\begin{itemize}
\item
  a \GpPrio/ system where each of the $k$ servers runs at speed $1/k$, and
\item
  a \MgPrio/ system where the single server runs at speed $1$.
\end{itemize}
We will refer to the total amount of remaining rank~$\le r$ work in the
\GpPrio/ system as $W^{\le r}(t)$,
and the total amount of remaining rank~$\le r$ work
in the \MgPrio/ system as $W^{\le r}_{\MgPrio/}(t)$.

It suffices to show that at any time $t$,
we have the following bound on
the difference in the total amounts of remaining rank~$\le r$ work
between the two systems:
\begin{equation}
  \label{eqn:couple}
  W^{\le r}(t) \le W^{\le r}_{\MgPrio/}(t) + \frac{2gkc^{r+2}}{c-1}.
\end{equation}
To prove (\ref{eqn:couple}), we consider two cases:
\begin{enumerate}[(i)]
\item
  At least one server in the \GpPrio/ system
  that has no remaining rank~$\le r$ work at time~$t$.
\item
  All servers in the \GpPrio/ system
  have remaining rank~$\le r$ work at time~$t$.
\end{enumerate}

In case~(i), suppose server $s'$ in the \GpPrio/ system
has no remaining rank~$\le r$ work at time $t$.
By Lemma~\ref{lem:work-balance}, we know that at all servers $s$,
\[W^{\le r}_s(t) = W^{\le r}_s(t) - W^{\le r}_{s'}(t) \le \frac{2gc^{r+2}}{c-1}.\]
Summing over all $k$ servers implies~(\ref{eqn:couple}).

We now turn to case~(ii).
Let $t_0$ be the most recent time before $t$ when a \GpPrio/ server
had no remaining rank~$\le r$ work.
Note that case~(i) applies at time~$t_0$.
Therefore,
it suffices to show that the difference in remaining rank~$\le r$ work
between the two systems is no more at time~$t$ than at time~$t_0$:
\begin{equation}
  \label{eqn:many}
  W^{\le r}(t) - W^{\le r}_{\MgPrio/}(t) \le W^{\le r}(t_0) - W^{\le r}_{\MgPrio/}(t_0).
\end{equation}
By definition of~$t_0$,
for the duration of entire time interval $(t_0, t)$,
each of the $k$~servers in the \GpPrio/ system
processes rank~$\le r$ work at speed $1/k$, for a total of $t - t_0$ work.
This is at least as much rank~$\le r$ work as
the \MgPrio/ system processes during $(t_0, t)$,
because the single server's speed is~$1$.
Due to coupling, the two systems receive the same amount of
rank~$\le r$ work during $(t_0, t)$,
implying~\eqref{eqn:many}.
\end{proof}

\subsubsection{Proof of Lemma~\ref{lem:first-bound}}
\label{sec:first-bound}

\begin{replemma}{lem:first-bound}
  For any dispatching policy~\P/,
  consider the dispatching policy \Gp/ with tightness~$g$.
  In a \GpPrio/ system,
  the response time of a job of size~$x$
  is stochastically bounded by
  \[T(x) \le_\st B_{< r}\biggl(W^{\le r}_{\MgPrio/} + \frac{(4c+2)gkc^{r+1}}{c-1} + kx\biggr),\]
  where $c$ is the guardrail rank width,
  $r = \lfloor \log_c x \rfloor$ is the rank of the job,
  and $B_{<r}(w)$ is the length of a busy period comprising
  only jobs of rank~$< r$ started by work~$w$.
\end{replemma}

\begin{proof}
Let
\begin{itemize}
\item
  $j$ be the tagged job,
\item
  $x$ be $j$'s size and $r = \floor{\log_c x}$ be $j$'s rank,
\item
  $a_j$ and $d_j$ be $j$'s arrival and departure times, respectively, and
\item
  $s$ be the server to which $j$ is dispatched.
\end{itemize}

The time at which job~$j$ departs, $d_j$,
can be calculated as the time
required for server~$s$ to complete the following work:
\begin{itemize}
\item
  relevant work already present at time~$a_j$,
  namely $W_s^{\le r}(a_j)$;
\item
  plus all relevant work that arrives at $s$ while $j$ is in the system,
  namely $V_s^{<r}(t) - V_s^{<r}(a_j)$;
\item
  plus $j$'s size, namely~$x$.
\end{itemize}
Let $t \geq a_j$ be an arbitrary time while~$j$ is in the system.
Because each server runs at speed $1/k$,
the amount of work that server~$s$ has completed by time $t$ is
$(t - a_j)/k$.
Writing
\begin{equation*}
  Z_s(t) = W_s^{\le r}(a_j) + (V_s^{<r}(t) - V_s^{<r}(a_j))
\end{equation*}
gives the following expression for $j$'s departure time~$d_j$:
\begin{equation}
  \label{eqn:djZ}
  d_j = \inf\biggl\{t \biggm| \frac{t - a_j}{k} \ge Z_s(t) + x\biggr\}.
\end{equation}

To bound $d_j$, we first bound $Z_s(t)$.
Let
\begin{equation}
  \label{eq:z-bar}
  \bar{Z}(t) = \frac{1}{k}\sum_{s'}Z_{s'}(t)
\end{equation}
be the average value of $Z_{s'}(t)$ over all servers~$s'$,
and let
\begin{equation*}
  Z_s^{\mathrm{maxdiff}}(t) = \max_{s'} (Z_s(t) - Z_{s'}(t))
\end{equation*}
be the maximum difference between $Z_s(t)$ and $Z_{s'}(t)$ over all servers~$s'$.
Observe that
\begin{equation*}
  Z_s(t) \le \bar{Z}(t) + Z_s^{\mathrm{maxdiff}}(t).
\end{equation*}
Combining this with (\ref{eqn:djZ}) gives a bound on $d_j$:
\begin{equation}
  \label{eqn:end-bound}
  d_j \le \inf \biggl\{t \biggm| \frac{t - a_j}{k} \ge
  \bar{Z}(t) + Z_s^{\mathrm{maxdiff}}(t) + x\biggr\}.
\end{equation}

To simplify (\ref{eqn:end-bound}), our next step is to bound $Z_s^{\mathrm{maxdiff}}(t)$.
We start by expanding $Z_s(t) - Z_{s'}(t)$:
\ifwidecol{\begin{equation}}{\begin{align}}
  \ifwidecolignorealigny
  Z_s(t) - Z_{s'}(t)
  \ifwidecol{}{&}= (W_s^{\le r}(a_j) - W_{s'}^{\le r}(a_j)) \\
  \notag
  \ifwidecol{}{&\qquad} + (V_s^{< r}(t) - V_s^{<r}(a_j)) \\
  \label{eq:z-expanded}
  \ifwidecol{}{&\qquad} - (V_{s'}^{< r}(t) - V_{s'}^{<r}(a_j)).
\ifwidecol{\end{equation}}{\end{align}}
We are left with an expression in terms of
the rank~$< r$ work dispatched to each server.
We would like to turn this into an expression in terms of
guardrail work counters,
which will allow us to apply the constraints
given by Definition~\ref{def:guardrails}.
Consider the time interval $(a_j, t)$.
Job~$j$ is present at server $s$ for the duration of the interval,
so server~$s$ does not reset,
implying
\begin{equation}
  \label{eq:v-vs-g-s}
  V_s^{< r}(t) - V_s^{< r}(a_j) = G_s^{< r}(t) - G_s^{< r}(a_j).
\end{equation}
In contrast, server~$s'$ may reset during $(a_j, t)$.
When a reset occurs at some time~$t_{\text{reset}}$,
$G_{s'}^{< r}(t_{\text{reset}})$ decreases while $V_{s'}^{< r}(t_{\text{reset}})$ stays constant.
Furthermore, $G_{s'}^{< r}(t')$ and $V_{s'}^{< r}(t')$ change in the same way
at all other times~$t'$, so
\begin{equation}
  \label{eq:v-vs-g-s'}
  V_{s'}^{< r}(t) - V_{s'}^{< r}(a_j) \geq G_{s'}^{< r}(t) - G_{s'}^{< r}(a_j).
\end{equation}
Applying \eqref{eq:v-vs-g-s}, \eqref{eq:v-vs-g-s'},
and Lemma~\ref{lem:work-balance} to \eqref{eq:z-expanded} yields the bound
\ifwidecol{\begin{equation*}}{\begin{align*}}
  \ifwidecolignorealigny
  \MoveEqLeft Z_s(t) - Z_{s'}(t)\\
  \ifwidecol{}{&}
  \le \frac{2gc^{r+2}}{c-1} + (G_s^{< r}(t) - G_{s}^{< r}(a_j)) - (G_{s'}^{<r}(t) - G_{s'}^{<r}(a_j)).
\ifwidecol{\end{equation*}}{\end{align*}}

Because \Gp/ is a guarded policy,
we can apply the guardrail constraints from Definition~\ref{def:guardrails}:
\begin{align*}
  \ifwidecol{}{\MoveEqLeft} Z_s(t) - Z_{s'}(t) \ifwidecol{}{\\}
  &\leq \frac{2gc^{r+2}}{c-1} + (G_s^{< r}(t) - G_{s'}^{< r}(t)) - (G_{s}^{<r}(a_j) - G_{s'}^{<r}(a_j))\\
  &= \frac{2gc^{r+2}}{c-1} + \sum_{q=-\infty}^{r-1} (G_s^q(t) - G_{s'}^q(t))
    + (G_{s'}^q(a_j) - G_s^q(a_j))\\
  &\leq \frac{2gc^{r+2}}{c-1} + \sum_{q=-\infty}^{r-1} (gc^{q+1} + gc^{q+1}) \\
  &= \frac{(2c + 2)gc^{r+1}}{c-1}.
\end{align*}
We have bounded $Z_s(t) - Z_{s'}(t)$ for arbitrary~$s'$
and hence bounded $Z_s^{\mathrm{maxdiff}}(t)$.
Substituting into (\ref{eqn:end-bound}) yields
\begin{equation*}
  d_j \le \inf \biggl\{t \biggm| \frac{t - a_j}{k} \ge
  \bar{Z}(t) + \frac{(2c + 2)gc^{r+1}}{c-1} + x\biggr\}.
\end{equation*}
Recalling the definition of $\bar{Z}(t)$ from \eqref{eq:z-bar} gives us
\ifwidecol{\begin{equation*}}{\begin{align*}}
  \ifwidecolignorealigny
  d_j \le \inf\biggl\{t \biggm| t - a_j
  \ifwidecol{}{&}\ge W^{\le r}(a_j) + (V^{<r}(t) - V^{<r}(a_j))\\
  \ifwidecol{}{&\qquad} + \frac{(2c+2)gkc^{r+1}}{c-1} + kx\biggr\}.
\ifwidecol{\end{equation*}}{\end{align*}}
Because the arrival process to the overall system is a Poisson process,
we can rewrite this in terms of a ``relevant'' busy period,
meaning one containing only jobs of rank~$< r$:
\begin{equation*}
  d_j - a_j \le B_{<r}\biggl(W^{\le r}(a_j) + \frac{(2c+2)gkc^{r+1}}{c-1} + kx\biggr).
\end{equation*}
The Poisson arrival process also implies,
by the PASTA property~\cite{wolff-pasta},
that the amount of relevant work $j$ sees on arrival, namely~$W^{\le r}(a_j)$,
is drawn from the steady-state distribution, namely~$W^{\le r}$, so
\begin{equation*}
  T(x) \le_\st B_{<r}\biggl(W^{\le r} + \frac{(2c+2)gkc^{r+1}}{c-1} + kx\biggr).
\end{equation*}
Applying Lemma~\ref{lem:rel-work-bound} to $W^{\le r}$ yields the desired bound.
\end{proof}

\begin{remark}
  \label{rmk:resets-optional}
  Note we can prove Lemmas~\ref{lem:work-balance} and~\ref{lem:first-bound}
  using only the following properties of resets:
  \begin{itemize}
  \item
    A server only resets when it is empty.
  \item
    When a server resets, its work counters do not increase.
  \item
    The guardrail constraints in Definition~\ref{def:guardrails}
    continue to hold after each reset.
  \end{itemize}
  In particular, this means that resets are optional
  for proving our response time bounds,
  so the bounds hold even if the dispatcher chooses to omit some resets.
  This is helpful when implementing guarded dispatching policies in large systems
  (see Sections~\ref{sec:network-delays} and~\ref{sec:multiple-dispatchers}).
\end{remark}

\subsubsection{Proof of Theorem~\ref{thm:guard-bound}}
\label{sec:guard-bound-proof}

\begin{reptheorem}{thm:guard-bound}
  For any dispatching policy~\P/,
  consider the policy \Gp/ with tightness~$g$.
  The expected response time for a job of size~$x$ under \GpPrio/
  is bounded by
  \begin{equation*}
    \E{T(x)}^{\GpPrio/}
    \le \frac{\frac{\lambda}{2} \int^{c^{r+1}}_0 t^2f_X(t)dt}{(1-\rho_{c^r})(1-\rho_{c^{r+1}})} + \frac{(4c+2)gk\frac{c^{r+1}}{c-1} + kx}{(1-\rho_{c^r})},
  \end{equation*}
  where
  \begin{itemize}
  \item
    $f_X(\cdot)$ is the probability density function of $X$,
  \item
    $c$ is the guardrail rank width
  \item
    $r = \floor{\log_c x}$ is the rank of a job of size~$x$, and
  \item
    $\rho_y = \lambda \int_0^y t f_X(t)\,dt$ is the load due to jobs of size $\le y$.
  \end{itemize}
\end{reptheorem}

\begin{proof}
  Recall the conclusion of Lemma~\ref{lem:first-bound},
  \begin{equation}
    \label{eqn:busy}
    T(x) \le B_{< r}\biggl(W^{\le r}_{\MgPrio/} + \frac{(4c+2)gkc^{r+1}}{c-1} + kx\biggr).
  \end{equation}
  Standard results on busy periods~\cite{mor-book} state that
  \[\E{B_{< r}(Y)} = \frac{\E{Y}}{1-\rho_{c^r}},\]
  and standard results on the single-server \Prio/ system\cite{mor-book}
  give the expected steady-state remaining rank~$\le r$ work:
  \[\E{W^{\le r}_{\MgPrio/}} = \frac{\frac{\lambda}{2} \int^{c^{r+1}}_0 t^2f_X(t)dt}{(1-\rho_{c^{r+1}})}.\]
  Taking expectations of (\ref{eqn:busy}) and applying these standard results
  yields the desired bound.
\end{proof}

\subsection{Asymptotic Behavior of Guarded Policies}
\label{sec:guard-srpt}
\begin{reptheorem}{thm:guard-srpt}
  Consider a single-server \Srpt/ system whose single server is
  $k$ times as fast as each server in the load balancing system.
  For any dispatching policy~\P/,
  consider the policy \Gp/ with any constant tightness.
  Then for any size distribution~$X$ which is either
  (i)~bounded or
  (ii)~unbounded with tail having upper Matuszewska index\matuszewska{} less than~$-2$,
  the mean response times of \GpSrpt/, \GpPrio/, and (single-server) \Srpt/ converge
  as load approaches capacity:
  \[\lim_{\rho \to 1} \frac{\E{T}^{\GpSrpt/}}{\E{T}^{\Srpt/}} = \frac{\E{T}^{\GpPrio/}}{\E{T}^{\Srpt/}} = 1.\]
\end{reptheorem}
\begin{proof}
  We start with the bound on the mean response time of
  \GpPrio/ from Theorem~\ref{thm:guard-bound}:
  \ifwidecol{\begin{equation}}{\begin{align}}
    \ifwidecolignorealigny
    \label{eqn:thm-bound}
    \MoveEqLeft \E{T(x)}^{\Gp//\Prio/} \nonumber \\
    \ifwidecol{}{&}\le \frac{\frac{\lambda}{2} \int^{c^{r+1}}_0 t^2f_X(t)dt}{(1-\rho_{c^r})(1-\rho_{c^{r+1}})} + \frac{(4c+2)gk\frac{c^{r+1}}{c-1} + kx}{(1-\rho_{c^r})}.
  \ifwidecol{\end{equation}}{\end{align}}
  Note that the first term in this expression also appears in the expression for
  mean response time under single-server \Prio/ \cite{mor-book}:
  \[\E{T(x)}^{\Prio/} = \frac{\frac{\lambda}{2} \int^{c^{r+1}}_0 t^2f_X(t)dt}{(1-\rho_{c^r})(1-\rho_{c^{r+1}})} + \frac{x}{(1-\rho_{c^r})}.\]
  Therefore, let us simplify~(\ref{eqn:thm-bound}):
  \ifwidecol{\begin{equation}}{\begin{align}}
    \ifwidecolignorealigny
    \label{eqn:prio-bound-1}
    \MoveEqLeft \E{T(x)}^{\Gp//\Prio/} \nonumber \\
    \ifwidecol{}{&}\le \E{T(x)}^{\Prio/} + \frac{(4c+2)gk\frac{c^{r+1}}{c-1} + (k-1)x}{(1-\rho_{c^r})}.
  \ifwidecol{\end{equation}}{\end{align}}
  We may simplify this bound by combining constant terms.
  Note that $c \le 2$ and $x \ge c^r$.
  Let $m=20gk + k - 1$.
  Then
  \[mx \ge (4c+2)gkc^{r+1} + (k - 1)(c-1)x.\]
  Thus, we may simplify (\ref{eqn:prio-bound-1}) further:
  \begin{equation*}
    \E{T(x)}^{\Gp//\Prio/}
    \le \E{T(x)}^{\Prio/} + \frac{mx}{(c-1)(1-\rho_{c^r})}.
  \end{equation*}

  We want to relate this to something more similar to the mean response time
  under \Srpt/. A convenient policy for comparison with \Prio/ that is
  similar enough to \Srpt/ is Preemptive-Shortest-Job-First (\Psjf/),
  which prioritizes jobs according to their original size.

    We now use Lemma~\ref{lem:prio-psjf} from Appendix~\ref{app:prio-psjf},
    which says that
    \[\E{T}^{\Prio/} \le (c+2\sqrt{c-1})\E{T}^{\Psjf/}.\]
    For brevity, let
    \[b(c) = (c+2\sqrt{c-1}).\]
    Using Lemma~\ref{lem:prio-psjf}, we find that
  \begin{equation*}
      \E{T(x)}^{\GpPrio/} \le b(c)\E{T}^{\Psjf/} + \frac{mx}{(c-1)(1-\rho_{c^r})}.
  \end{equation*}
  Note that $x \ge c^r$, so $\rho_x \ge \rho_{c^r}$, so
  \begin{equation*}
      \E{T(x)}^{\GpPrio/} \le b(c)\E{T}^{\Psjf/} + \frac{mx}{(c-1)(1-\rho_x)}.
  \end{equation*}
  Based on standard results on mean response time under \Psjf/ and \Srpt/~\cite{mor-book},
  we know that
  \[\E{T(x)}^{\Psjf/} \le \E{T(x)}^{\Srpt/} + \frac{x}{1-\rho_x}.\]
  Let $m' = m + 4$. Because $c \le 2$, we know $b(c)\cdot(c-1) \le 4$. Thus,
  \begin{equation}
    \label{eqn:srpt-bound}
      \E{T(x)}^{\GpPrio/} \le b(c)\E{T(x)}^{\Srpt/} + \frac{m' x}{(c-1)(1-\rho_x)}.
  \end{equation}

  Next, we take the expectation of (\ref{eqn:srpt-bound}) over all job sizes $x$.
  To do so, we need to integrate
  \[\int_0^\infty \frac{x}{1-\rho_x}f_X(x)dx,\]
  where $f_X(\cdot)$ is the probability density function of $X$.
  To compute the integral,
  we make a change of variables from $x$ to $\rho_x$,
  using the following facts:
  \begin{align*}
    \rho_x &= \int_0^x \lambda t f_X(t) dt\\
    \frac{d\rho_x}{dx} &= \lambda x f_X(x)\\
    \rho_0 &= 0\\
    \lim_{x \to \infty} \rho_x &= \rho.
  \end{align*}
  Given this change of variables, we compute
  \begin{align*}
    \int_0^\infty \frac{x}{1-\rho_x}f_X(x)dx
    = \int_0^\rho\frac{1}{\lambda(1-\rho_x)}d\rho_x
    =\frac{1}{\lambda}\ln \biggl(\frac{1}{1-\rho} \biggr).
  \end{align*}
  Applying this to~(\ref{eqn:srpt-bound}), we find that
  \begin{equation*}
      \E{T}^{\GpPrio/} \le b(c)\E{T}^{\Srpt/} + \frac{m'}{\lambda(c-1)}\ln \biggl(\frac{1}{1-\rho} \biggr).
  \end{equation*}

  Dividing through by $\E{T}^{\Srpt/}$, we find that
  \begin{equation*}
      \frac{\E{T}^{\GpPrio/}}{\E{T}^{\Srpt/}} \le b(c) + \frac{m'\ln\frac{1}{1-\rho}}{\lambda(c-1)\E{T}^{\Srpt/}}.
  \end{equation*}
  Plugging in the value of $c$ in terms of $\rho$ from~(\ref{eqn:def-c}),
  \begin{equation*}
    \frac{\E{T}^{\GpPrio/}}{\E{T}^{\Srpt/}}
    \le b(c) + \frac{m'\cdot(\ln^2\frac{1}{1-\rho} + \ln\frac{1}{1-\rho})}{\lambda\E{T}^{\Srpt/}}.
  \end{equation*}
  We now take the limit of the above ratio as  $\rho \to 1$.
  In this limit,
  \begin{itemize}
  \item $\ln\frac{1}{1-\rho}$ diverges,
  \item $\lambda$ approaches $\E{X}$, and
  \item $c$ approaches~$1$, so $b(c)$ also approaches~$1$:
    \begin{equation*}
      \lim_{c \to 1+} c + 2\sqrt{c-1} = 1.
    \end{equation*}
  \end{itemize}
  Therefore, letting $m'' = 2m'/\E{X}$,
  \begin{equation}
    \label{eqn:final}
    \lim_{\rho \to 1} \frac{\E{T}^{\Gp//\Prio/}}{\E{T}^{\Srpt/}}
    \le\lim_{\rho \to 1}\biggl( 1 + \frac{m'' \ln^2\frac{1}{1-\rho}}{\E{T}^{\Srpt/}}\biggr).
  \end{equation}

  Recall now that we assume that $X$ is either (i)~bounded or (ii)~unbounded with
  tail function having upper Matuszewska index\matuszewska{} less than -2.
  In Lemma~\ref{lem:lin-bound} in Appendix~\ref{app:lin-bound},
  we use a result of Lin et al.~\cite{Lin} to show that in either case,
  \begin{equation*}
    \lim_{\rho \to 1}\frac{\ln^2\frac{1}{1-\rho}}{\E{T}^{\Srpt/}} = 0.
  \end{equation*}
  Applying this to (\ref{eqn:final}), we find that
  \begin{equation}
    \label{eqn:last}
    \lim_{\rho \to 1} \frac{\E{T}^{\Gp//\Prio/}}{\E{T}^{\Srpt/}} \le 1.
  \end{equation}
  \Srpt/ yields optimal mean response time
  over all single-server policies~\cite{schrage-srpt-optimal},
  and a joint dispatching/scheduling policy
  can be emulated on a single server,
  so (\ref{eqn:last}) is in fact an equality, as desired.

  The optimality of \Srpt/'s mean response time also implies that
  the mean response time under \GpSrpt/ is no more
  than the mean response time under \GpPrio/.
  As a result,
  \[\lim_{\rho \to 1} \frac{\E{T}^{\GpSrpt/}}{\E{T}^{\Srpt/}} = 1. \qedhere\]
\end{proof}

\subsection{Optimality of Guarded Policies}
\label{sec:guard-opt}
As a simple corollary of Theorem~\ref{thm:guard-srpt},
we find that for any dispatching policy~\P/,
\Gp//\Srpt/
has optimal mean response time in the heavy traffic limit
over all joint dispatching/scheduling policies.
\begin{repcorollary}{cor:guard-opt}
    For any dispatching policy~\P/
    consider the policy \Gp/ with any constant tightness.
    Consider any joint dispatching/scheduling policy \ArbArb/.
    Then for any size distribution~$X$ which is either
  (i)~bounded or
  (ii)~unbounded with tail having upper Matuszewska index\matuszewska{} less than~$-2$,
    the mean response times of \GpSrpt/ and \GpPrio/
    are at least as small as the mean response time of \ArbArb/
    as load approaches capacity:
    \[\lim_{\rho \to 1} \frac{\E{T}^{\GpSrpt/}}{\E{T}^{\ArbArb/}} = \frac{\E{T}^{\GpPrio/}}{\E{T}^{\ArbArb/}} \le 1.\]
\end{repcorollary}
\begin{proof}
  \Srpt/ has optimal mean response time
  among all single-server policies~\cite{schrage-srpt-optimal},
  and any joint dispatching/scheduling policy
  can be emulated on a single server,
  so $\E{T}^{\Srpt/} \le \E{T}^{\ArbArb/}$.
  The result thus follows from Theorem~\ref{thm:guard-srpt}.
\end{proof}

\begin{figure*}[t]
  \begin{subfigure}{.48\linewidth}
    \centering
    \includegraphics[width=\linewidth, trim={0 1em 0 0}]{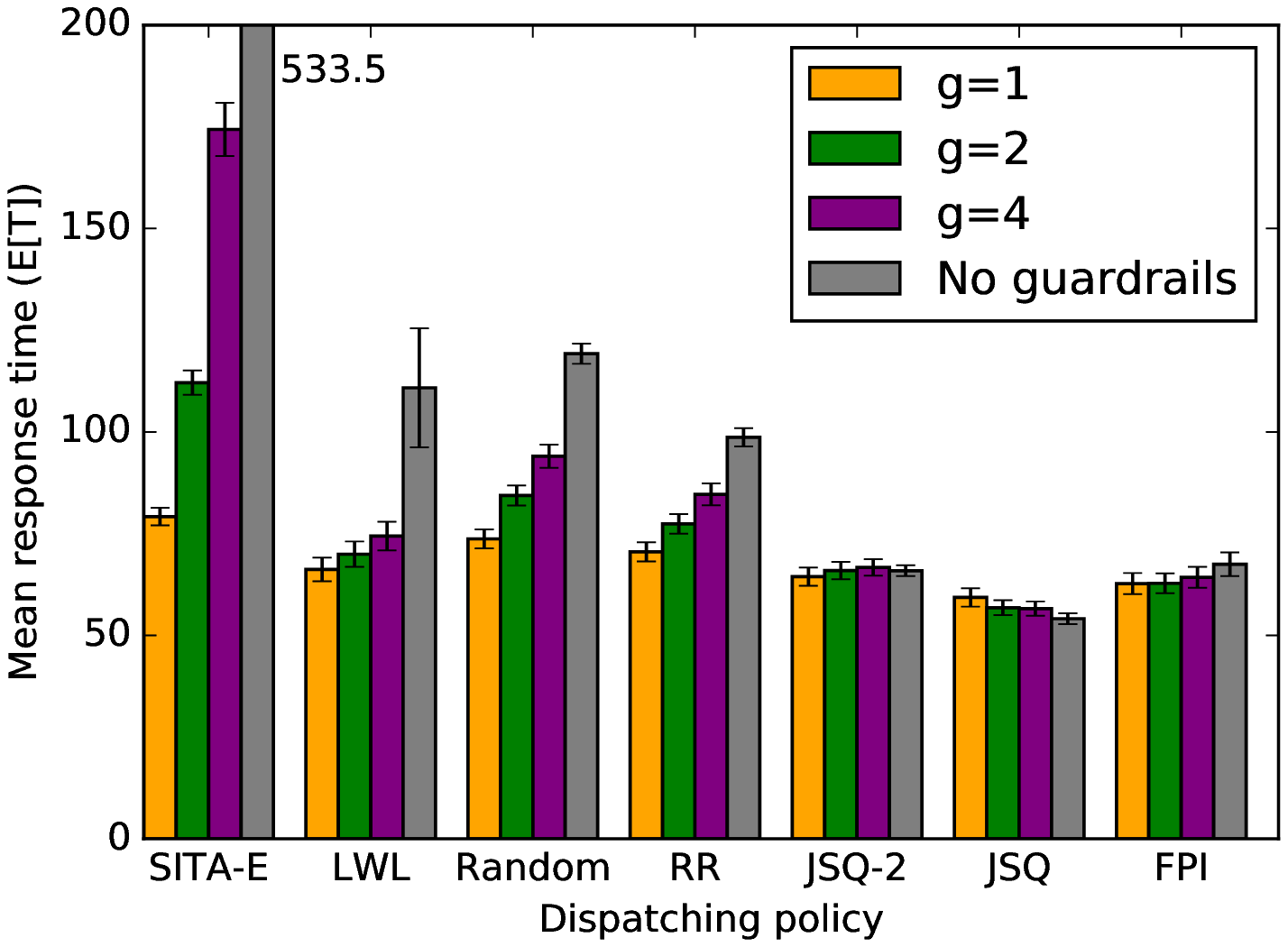}
  \caption{$\rho = 0.98$}
  \vspace{-0.5\baselineskip}
  \end{subfigure}\hfill
  \begin{subfigure}{.48\linewidth}
    \centering
    \includegraphics[width=\linewidth, trim={0 1em 0 0}]{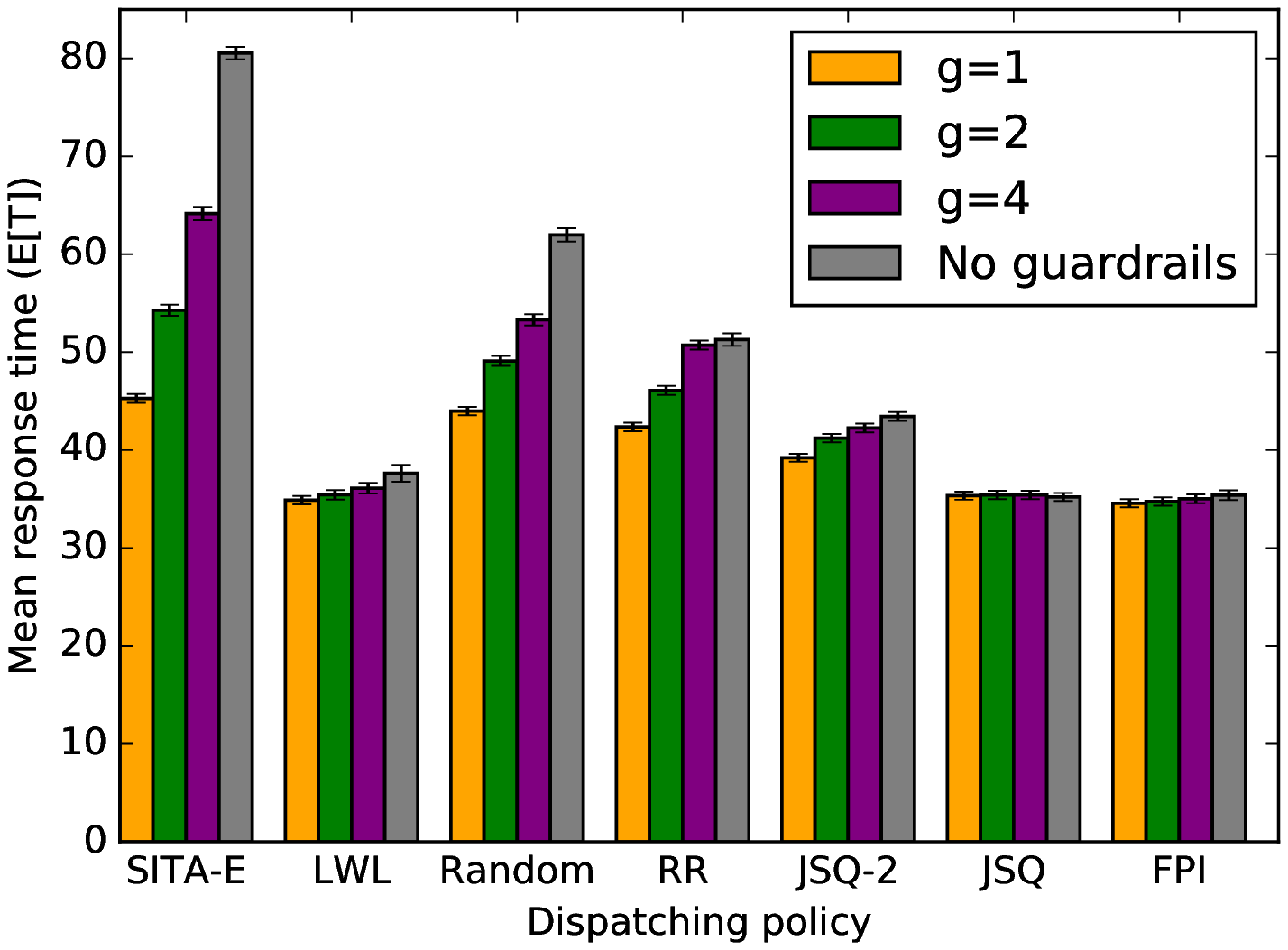}
    \caption{$\rho = 0.80$}
    \vspace{-0.5\baselineskip}
  \end{subfigure}
  \caption{At heavy load $\rho=0.98$, adding guardrails
    significantly reduces the mean response times of
    SITA\=/E, LWL, Random, and RR,
    while leaving the mean response times of
    JSQ, JSQ\=/2, and FPI nearly unchanged.
    At more moderate load $\rho=0.8$,
    adding guardrails significantly reduces
    the mean response times of SITA\=/E, Random, RR, and JSQ\=/2
    while leaving the mean response times of LWL, JSQ, and FPI nearly unchanged.
    The smallest tightness, $g=1$,
    shows the best performance for all policies except JSQ and FPI,
    where it doesn't really matter.
    Simulation uses $k=10$ servers.
    Size distribution shown is Bounded Pareto
    with $\alpha=1.5$ and range $[1, 10^6]$.
    $C^2 \sim 333$. 40 trials simulated, 95\% confidence intervals shown.}
  \label{fig:bp-many-pol-g}
\end{figure*}

\section{Simulation}
\label{sec:simulation}

We have shown that in heavy traffic,
adding guardrails to any dispatching policy gives it
optimal mean response time.
We now turn to investigating loads outside the heavy-traffic regime.
While the mean response time upper bound in Theorem~\ref{thm:guard-bound}
holds for all loads,
it is only tight in the heavy-traffic limit.
We therefore focus on simulation.

We consider the following dispatching policies,
each paired with SRPT scheduling at the servers:
\begin{description}
\item[Random] The policy which dispatches each job to a
  uniformly random server.
\item[Round-Robin (RR)] The policy which dispatches each job
  to the server which least recently received a job.
\item[Least-Work-Left (LWL)] The policy which dispatches each job
  to the server with the least remaining work.
\item[Size-Interval-Task-Assignment (SITA)] The policy which
  classifies jobs into size intervals (small, medium, large, etc.)
  and dispatches all small jobs to one server,
  all medium jobs to another server, etc.
  Specifically, we simulate \textbf{SITA\=/E},
  the SITA policy which chooses the size intervals
  to equalize the expected load at each server.
\item[Join-Shortest-Queue (JSQ)] The policy which dispatches each job
  to the server with the fewest jobs present.
\item[Join-Shortest-Queue-\textit{d} (JSQ\=/\textit{d})] The policy which samples
  $d$ uniformly random servers on each arrival
  and dispatches the job to the server with the fewest jobs present
  among those~$d$.
  We focus on the $d = 2$ case.
\item[First Policy Iteration (FPI)] The first policy iteration heuristic,
  as described by Hyyti\"a et al.~\cite{hyytia-fpi}
  in the setting of dispatching to SRPT servers.
  FPI dispatches each job to
  the server that would be optimal
  if all jobs thereafter were dispatched randomly.
  Hyyti\"a et al.'s derivation of the FPI policy
  assumes that the job size distribution is continuous,
  and specifically that two different jobs almost surely have different sizes.
  As a result, we only implement the FPI policy for the Bounded Pareto distribution,
  shown in Figure~\ref{fig:bp-many-pol-g}.
\end{description}

\subsection{Simulation Results}

\begin{figure*}
  \begin{subfigure}{.48\linewidth}
    \centering
    \includegraphics[width=\linewidth, trim={0 1em 0 0}]{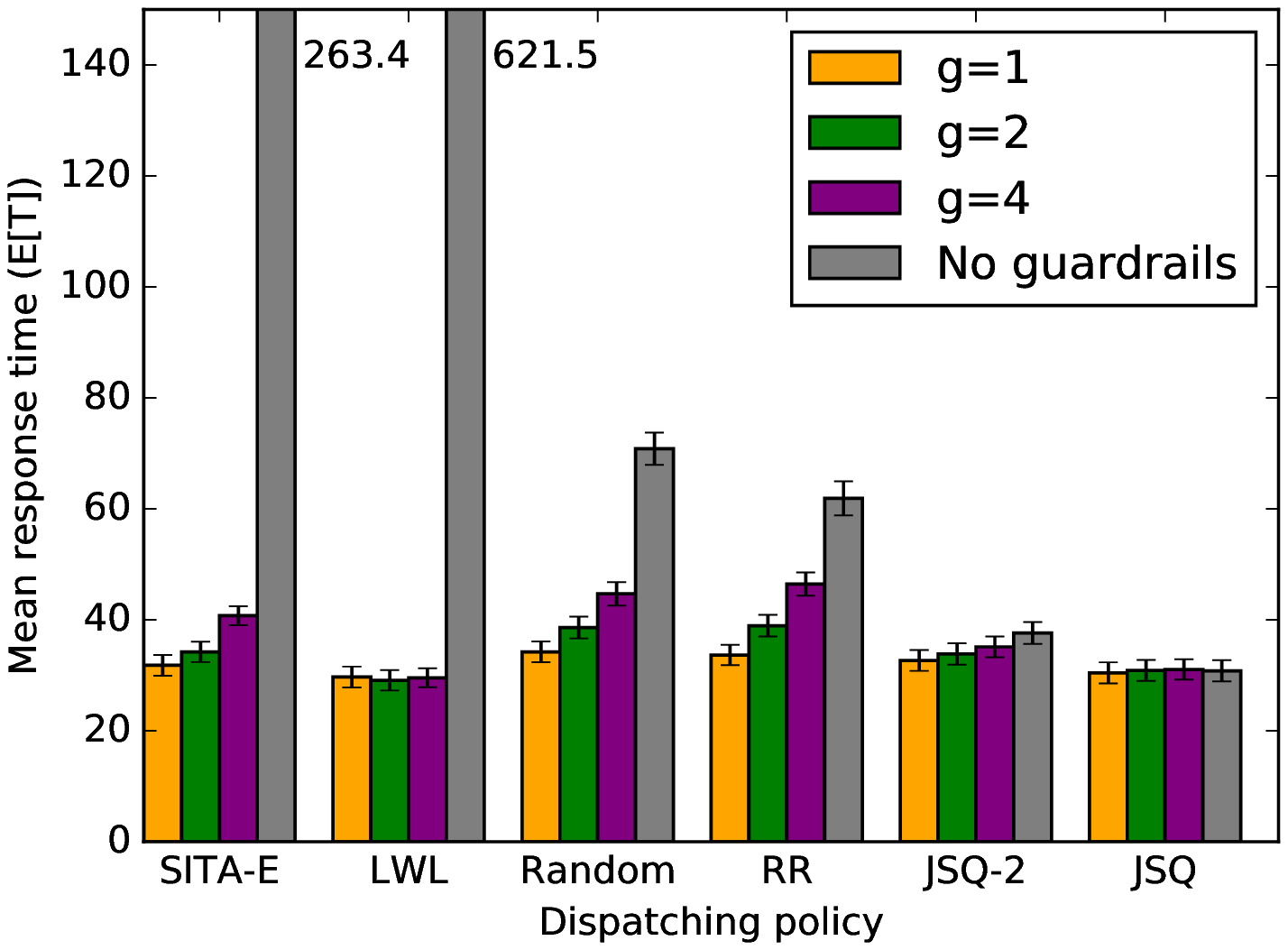}
  \caption{$\rho = 0.98$}
  \vspace{-0.5\baselineskip}
  \end{subfigure}\hfill
  \begin{subfigure}{.48\linewidth}
    \centering
    \includegraphics[width=\linewidth, trim={0 1em 0 0}]{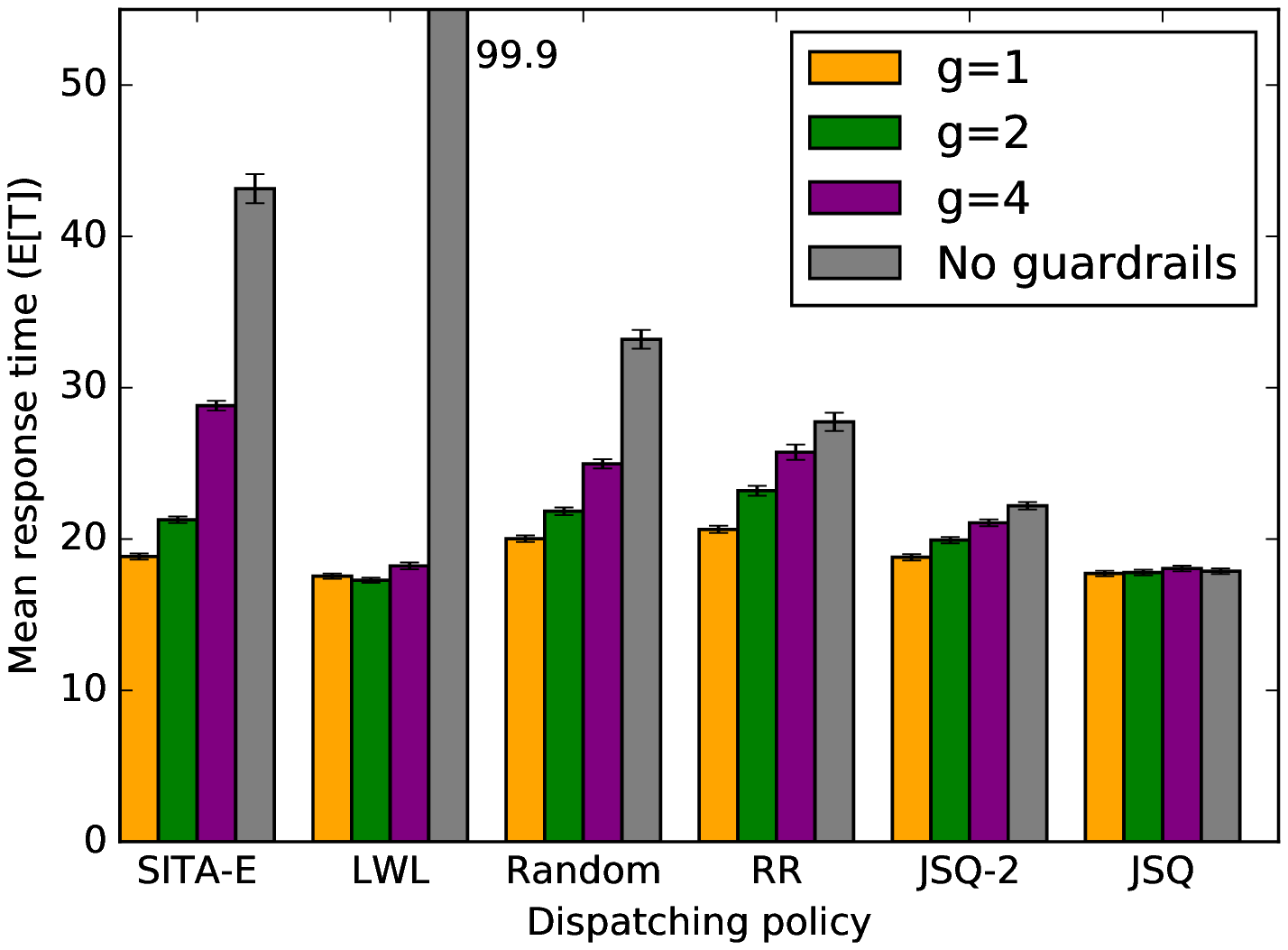}
    \caption{$\rho = 0.80$}
    \vspace{-0.5\baselineskip}
  \end{subfigure}
  \caption{At heavy load $\rho = 0.98$, adding guardrails
    significantly reduces the mean response time of
    SITA\=/E, LWL, Random, and RR,
    while leaving the mean response times of JSQ and JSQ\=/2
    nearly unchanged.
    At more moderate load $\rho=0.8$,
    adding guardrails significantly reduces the mean response times of
    SITA\=/E, LWL, Random, RR, and JSQ\=/2,
    while leaving the mean response time of JSQ nearly unchanged.
    The smallest tightness, $g=1$,
    shows the best performance for all policies except JSQ,
    where it doesn't really matter.
      Simulation uses $k=10$ servers.
      Size distribution shown is Bimodal with
      99.95\% size 1 jobs and 0.05\% size 1000 jobs.
      $C^2 \sim 221$. 40 trials simulated, 95\% confidence intervals shown.}
  \label{fig:bm-many-pol-g}
\end{figure*}

Figures~\ref{fig:bp-many-pol-g} and~\ref{fig:bm-many-pol-g} show the
mean response time under all of the above dispatching policies
with SRPT scheduling at the servers.
We omit the FPI policy from Figure~\ref{fig:bm-many-pol-g},
because that simulation's job size distribution is not continuous,
and Hyyti\"a et al. do not derive the FPI policy for such distributions~\cite{hyytia-fpi}.
We also show $95\%$ confidence intervals for each mean response time.
We consider two different job size distributions:
a Bounded Pareto distribution (see Figure~\ref{fig:bp-many-pol-g})
and a Bimodal distribution (see Figure~\ref{fig:bm-many-pol-g}).
In each case we show (a)~very heavy traffic ($\rho = 0.98$)
and (b)~more moderate traffic ($\rho = 0.8$).
We augment each dispatching policy with guardrails of varying tightness,
$g=1, 2,$ and $4$.

The high-level message seen in Figures~\ref{fig:bp-many-pol-g}
and~\ref{fig:bm-many-pol-g} is that adding guardrails
to dispatching policies can greatly reduce their response times,
even at more moderate loads.
Simple dispatching policies
like Random and RR improve by $15-40\%$ when $\rho = 0.8$
and $30-50\%$ when $\rho = 0.98$, in the figures shown.
Other policies, like LWL and SITA\=/E,
show even more dramatic improvement for certain job size distributions.
We find using tightness $g=1$ is generally best.

The FPI heuristic of Hyyti\"a et al.~\cite{hyytia-fpi} performs about
equally well with or without guardrails.
Figure~\ref{fig:bp-many-pol-g} shows that adding guardrails to FPI
yields a slight reduction in mean response time at $\rho = 0.98$,
and has essentially no effect at $\rho = 0.8$.
The FPI heuristic performs well in simulation,
but its only theoretical guarantee is it outperforms Random.
Applying guardrails
guarantees optimal mean response time in heavy traffic,
while maintaining or improving performance in simulation.

We observe that the JSQ dispatching policy
performs well even without guardrails.
In fact, guardrails can be seen as helping
all the other dispatching policies to improve their performance
to approach JSQ.
We do not know of any guarantees on JSQ's performance with \Srpt/ servers,
even under heavy traffic,
unless we add guardrails to JSQ.
Figures~\ref{fig:bp-many-pol-g} and~\ref{fig:bm-many-pol-g}
show that adding guardrails to JSQ does not affect its performance much.

In Appendix~\ref{app:simulation}, we simulate guarded policies
under a variety of alternative system conditions.
We simulate systems with more servers, systems with lighter load
and systems with different job size distributions.

\subsection{Simulation Discussion and Intuition}
Recall from Section~\ref{sec:guard-def} the intuition behind guardrails:
Guardrails force the dispatching policy to spread out
small jobs
across all of the servers.
Guardrails thereby
ensure that the maximum possible number of servers are working
on the smallest jobs available.

Let us consider this intuition in light of each of the dispatching policies.
SITA\=/E does the opposite of spreading small jobs:
It clumps the smallest jobs onto the same server.
Therefore, G-SITA\=/E shows a massive improvement.
In particular, we want $g$ to be
as low as possible ($g=1$),
corresponding to the greatest guardrail control,
to prevent SITA\=/E from doing what it was designed to do.

Random and RR are better at spreading jobs naturally,
but they still make mistakes.
In particular, Random and RR do not differentiate
between jobs of different sizes
and they do not observe the state of the servers,
so they only spread out the small jobs by chance.
As a result, G-Random and G-RR show sizable improvements.
The tightest guardrails ($g=1$)
increase the spread of the small jobs the most,
and hence show the most improvement.

LWL does not spread out the small jobs.
In fact, one huge job at a server can keep away all of the small jobs
for a long time.
However, LWL is so efficient at using its servers
that one does not experience its shortcomings
unless both load and job size variability are high.
Under those circumstances, LWL has very high mean response times.
At load $\rho = 0.98$
with the Bimodal size distribution shown in Figure~\ref{fig:bm-many-pol-g},
LWL's mean response time is 7 times worse
than that of Random.
Guardrails are particularly effective in these situations
where LWL fails because they force small jobs onto the servers
with one large job.
The tightest guardrails ($g=1$)
force small jobs onto those servers most aggressively
and hence show the lowest mean response time
in Figures~\ref{fig:bp-many-pol-g} and~\ref{fig:bm-many-pol-g}.

\subsection{Comparing Simulation to Analytical Bounds}

In Theorem~\ref{thm:guard-bound}
we established an analytical upper bound on the mean response time
of any guarded policy.
This bound is tight in the limit as load $\rho \rightarrow 1$,
and implies the heavy traffic optimality of any guarded policy.

However, this bound is not tight under the more moderate loads simulated in this section.
Under the system conditions shown in Figure~\ref{fig:bp-many-pol-g},
at load $\rho = 0.8$ and guardrail tightness $g=1$,
Theorem~\ref{thm:guard-bound} implies that any guarded policy has
mean response time at most $350$,
and that at load $\rho = 0.98$ the mean response time is at most $700$.
The actual performance of guarded policies is much better than this,
as shown in Figure~\ref{fig:bp-many-pol-g}.
Tightening our bound is a potential direction for future research.

\section{Practical Considerations}
\label{sec:practical}

We now discuss several useful properties of guardrails
that help when implementing them in practical systems.
We also extend guardrails to cover a broader range of applications.

\subsection{Robustness to Network Delays}
\label{sec:network-delays}

Guardrails are relatively simple to implement:
the dispatcher stores work counters for each rank and each server,
increasing the appropriate work counter whenever it dispatches a job
(see Algorithm~\ref{alg:guarded}).
For the most part,
the dispatcher does not need to monitor the precise state of each server.
The only exception is that
whenever a server becomes empty, the server \emph{resets},
which decreases all of the dispatcher's work counters for that server.
As we will explain shortly, this complicates the implementation,
particularly in settings with network delays.

Fortunately, resets are optional for the purposes of heavy-traffic optimality
(see Remark~\ref{rmk:resets-optional}).
However, resets are still desirable because
they help decrease response time at lower loads.
Specifically, resets ensure that a guarded policy is always allowed to
dispatch jobs to empty servers.
We thus do not want to ignore resets entirely.

To implement resets without
the dispatcher needing to track the remaining work at each server at all times,
servers can send ``reset messages'' to the dispatcher when they become empty.
This works well so long as messages do not experience network delays,
because our analysis
(see Lemmas~\ref{lem:work-balance} and~\ref{lem:first-bound})
assumes that servers only reset when they are empty,
which might not be the case if a reset message is delayed.

In practice, reset messages may well experience network delays,
To handle delays, the dispatcher should ignore reset messages
from servers that might not be empty.
One protocol for doing so is:
\begin{itemize}
\item
  The dispatcher stores, for each server~$s$,
  a hash of all the job IDs sent to~$s$.
\item
  Each server~$s$ stores a hash of all the job IDs it has received.
\item
  When a server becomes empty,
  it sends a reset message to the dispatcher
  which contains the currently stored hash.
\item
  When the dispatcher receives a reset message from server~$s$,
  it resets~$s$ if
  the reset message's hash matches the stored hash for~$s$.
  Otherwise, the dispatcher ignores the reset message.
\end{itemize}

\subsection{Multiple Dispatchers}
\label{sec:multiple-dispatchers}

Many large load balancing systems in practice have multiple dispatchers,
each of which sends jobs to the same set of servers.
Communication between the dispatchers may be limited,
in which case they each have to make dispatching decisions independently.
Fortunately, in systems with multiple dispatchers,
it suffices to have each dispatcher independently implement guardrails.
As explained below,
we obtain the same theoretical guarantees for each of the following:
\begin{itemize}
\item
  A system with $d$ dispatchers,
  each independently satisfying guardrails with tightness~$g$.
\item
  A system with a single dispatcher
  satisfying guardrails with tightness~$dg$.
\end{itemize}
Guardrails thus guarantee heavy-traffic optimality
for systems with any constant number of dispatchers.

To see why it suffices to implement guardrails separately for each dispatcher,
consider a system with $d$ dispatchers.
Suppose each dispatcher separately keeps ``local'' guardrail work counters,
which only track arrivals to that dispatcher,
and implements guardrails with tightness~$g$.
We can also imagine what the ``global'' guardrail work counters,
which track all arrivals at all dispatchers,
would look like,
even though there is no physical device storing them.
We ask: given that the local counters have tightness~$g$,
what is the tightness of the global counters?
Consider
the local and global rank~$r$ guardrail work counters
for two servers $s$ and~$s'$.
Each dispatcher's local counter pair has difference at most $gc^{r+1}$
(see Definition~\ref{def:guardrails}),
and there are $d$ dispatchers,
so the global counter pair has difference at most $dgc^{r+1}$.
This means the global counters stay within tightness~$dg$.

Systems with multiple dispatchers tend to be large systems
in which network delays are non-negligible.
The reset protocol from Section~\ref{sec:network-delays}
can be easily adapted to multiple dispatchers by having each server
store a separate hash of job IDs for each dispatcher.

\subsection{Scheduling Policies other than \Srpt/}

We have shown that guarded dispatching policies provide theoretical guarantees
and good empirical performance
for load balancing systems using \Srpt/ scheduling at the servers.
However, in some settings it is impossible to use \Srpt/.
For example, network hardware often allows scheduling using
only finitely many priority classes,
in which case \Srpt/ can only be
approximated~\cite{ousterhout-homa, mor-web-server}.
Systems may also choose a non-\Srpt/ scheduling policy for other reasons,
such as fairness concerns~\cite{wierman-fairness}.

Guardrails are sometimes suitable
even when the servers are using a scheduling policy other than \Srpt/.
In particular, we can extend our theoretical guarantees
to many preemptive size-based scheduling policies that favor small jobs.
We have already proven such a guarantee for the \Prio/ policy
(see Theorem~\ref{thm:guard-srpt}).
Using bounds proved by Wierman et al.~\cite{wierman-smart},
one can extend our results to
all policies in their SMART class,
which includes Preemptive-Shortest-Job-First (\Psjf/) and
Shortest-Processing-Time-Product (SPTP, also known as RS).

Guardrails can also provide guarantees for
size-based policies with finitely many priority classes,
which are used in some computer systems
to approximate \Srpt/~\cite{ousterhout-homa, mor-web-server}.
In this setting, each priority class corresponds to an interval of job sizes.
Here it is most natural to use a slightly modified version of guardrails:
a guarded policy is one ensuring that for any class~$i$,
the maximum difference between two servers' class~$i$ work counters
never exceeds the upper bound of class~$i$'s size interval.
If the job size distribution is bounded,
these modified guardrails guarantee a mean response time bound
analogous to Theorem~\ref{thm:guard-bound}.
This implies that in the heavy traffic limit,
the system's performance approaches that of one large server
using the same scheduling policy.

So far, we have only considered policies
that use job size information to favor small jobs.
This is the setting in which guardrails are most likely to be effective.
We conjecture that guardrails might also be useful
for servers using PS or Foreground-Background (FB) scheduling,
as these policies also tend to favor small jobs,
so they may benefit from spreading out small jobs across the servers.

\subsection{Heterogeneous Server Speeds}

We have thus far assumed that all servers in the system
have the same speed,
but this is not always the case.
Fortunately, guardrails can be adapted to systems
with heterogeneous server speeds.
The key is to track each server's guardrail work counter $G_s(t)$ in units of time.
That is, when we dispatch job of size~$x$ to a server with speed $\mu$,
we increase the server's guardrail work counter by $x/\mu$.
It is simple to generalize our response time bound in Theorem~\ref{thm:guard-bound}
to this setting by multiplying the bound's last term by $\mu_{\max}/\mu_{\min}$,
where $\mu_{\min}$ and $\mu_{\max}$ are
the minimal and maximal server speeds, respectively.
This implies that any guarded policy paired with \Srpt/ service
is heavy-traffic optimal with heterogeneous servers.

\section{Conclusion}

We introduce \emph{load balancing guardrails},
a technique for augmenting dispatching policies
that ensures low response times
in load balancing systems using SRPT scheduling at the servers.
We prove that guardrails guarantee optimal mean response time in heavy traffic,
and we show empirically that guardrails reduce mean response time
across a range of loads.
Moreover, guardrails are simple to implement
and are a practical choice for large load balancing systems,
including those with multiple dispatchers and network delays.

One direction for future work could address a limitation of guardrails:
they require the dispatcher to know each job's exact size.
Many computer systems only have access to noisy job size estimates
or have no size information at all.
When exact size information is not available,
minimizing mean response time becomes much more complex,
as it is not even clear what scheduling policy should be used at the servers.
It is possible that a variation of guardrails could be
used to create good dispatching policies
when using the celebrated Gittins index scheduling
policy~\cite{gittins-book, aalto-m/g/1-gittins} at the servers.

Our analysis of guardrails constitutes
the first closed-form mean response time bound
for load balancing systems with general job size distribution
and complex dispatching and scheduling policies.
However, the bound is only tight in the heavy-traffic limit.
Developing better analysis tools for the light traffic case
remains an important open problem.

\begin{acks}
  We thank Gustavo de Veciana and the anonymous referees
  for their helpful comments.
  This research was supported by
  NSF-\grantnum{nsf}{XPS-1629444},
  NSF-\grantnum{nsf}{CSR-180341},
  and a 2018 Faculty Award from
  \grantsponsor{microsoft}{Microsoft}{%
    https://www.microsoft.com/en-us/research/academic-programs/}.
  Additionally, Ziv Scully was supported by an
  \grantsponsor{arcs}{ARCS Foundation}{https://www.arcsfoundation.org}
  scholarship and the \grantsponsor{nsf}{NSF}{https://www.nsf.gov} GRFP under
  Grant Nos.~\grantnum{nsf}{DGE-1745016} and~\grantnum{nsf}{DGE-125222}.
\end{acks}

\bibliographystyle{ACM-Reference-Format}
\bibliography{imd}

\appendix

\section{Matuszewska Index}
\label{app:matuszewska}
The optimality results in this paper,
such as Theorem~\ref{thm:guard-srpt},
assume that the job size distribution~$X$ is not too heavy-tailed.
Specifically, we assume that either $X$ is bounded,
or that the \emph{upper Matuszewska index} of the tail of $X$
is less than~$-2$.
This is slightly stronger than assuming that $X$ has finite variance.
The formal definition of the upper Matuszewska is the following.

\begin{definition}
  Let $f$ be a positive real function. The \emph{upper Matuszewska index} of $f$,
  written $M(f)$,
  is the infimum over $\alpha$
  such that there exists a constant $C$
  such that for all $\gamma > 1$,
  \[\lim_{x \to \infty} \frac{f(\gamma x)}{f(x)} \le C \gamma^\alpha.\]
  Moreover, for all $\Gamma>1$, the convergence $x\to\infty$ must be uniform
  in $\gamma \in [1, \Gamma]$.
\end{definition}
The condition $M(\bar{F}_X)<-2$, where $\bar{F}_X$ is the tail of $X$,
is intuitively close to saying that $F_X(x) \le Cx^{-2-\epsilon}$
for some constant $C$ and some $\epsilon > 0$.
Roughly speaking, this means that $X$ has a lighter tail
than a Pareto distribution with $\alpha = 2$.

\section{Lemma~\ref{lem:prio-psjf}}
\label{app:prio-psjf}
\begin{lemma}
  For any job size distribution,
  the mean response time of a single-server \Prio/ system
  is no more than $c$ times the mean response time of a
  single-server \Psjf/ system:
  \label{lem:prio-psjf}
  \begin{equation*}
    \E{T}^{\Prio/} \le (c + 2\sqrt{c-1})\E{T}^{\Psjf/}.
  \end{equation*}
\end{lemma}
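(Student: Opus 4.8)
The plan is to work directly from the exact per-size response-time formulas. For a job of size $x$ with rank $r=\floor{\log_c x}$,
\[
  \E{T(x)}^{\Prio/}
  = \frac{\frac{\lambda}{2}\int_0^{c^{r+1}}t^2 f_X(t)\,dt}{(1-\rho_{c^r})(1-\rho_{c^{r+1}})} + \frac{x}{1-\rho_{c^r}},
\]
while the standard M/G/1 formula for \Psjf/ has waiting-time part $\frac{\frac{\lambda}{2}\int_0^{x}t^2 f_X(t)\,dt}{(1-\rho_x)^2}$ and residence part $\frac{x}{1-\rho_x}$. First I would dispose of the residence terms: since $c^r\le x$ implies $\rho_{c^r}\le\rho_x$, the \Prio/ residence term $\frac{x}{1-\rho_{c^r}}$ is pointwise at most the \Psjf/ residence term $\frac{x}{1-\rho_x}$. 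Because $c+2\sqrt{c-1}\ge 1$, it then suffices to prove the inequality for the waiting-time parts alone.

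For the waiting parts I would integrate size by size and split the integral into the geometric ranks $[c^r,c^{r+1})$, arguing one rank at a time. The crucial structural facts are that the \Prio/ waiting term is \emph{constant} across a rank (it depends only on $r$), that the \Psjf/ waiting term is increasing in $x$, and that all sizes inside a rank lie within a factor $c$ of one another, so $\int_0^x t^2 f_X(t)\,dt$, $\rho_x$, and $x$ change slowly across the rank. Changing variables to $u=\rho_x$ is convenient: it makes $f_X(x)\,dx=\frac{du}{\lambda x(u)}$, turns $\frac{\lambda}{2}\int_0^x t^2 f_X(t)\,dt$ into the smooth increasing function $\frac12\int_0^u x(v)\,dv$, and lets one use the identity $\frac1{(1-\rho_{c^r})(1-\rho_{c^{r+1}})}=\frac1{\rho_{c^{r+1}}-\rho_{c^r}}\int_{\rho_{c^r}}^{\rho_{c^{r+1}}}\frac{du}{(1-u)^2}$, so that both sides of the per-rank inequality become integrals in the same variable.

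The exact constant should come out of a Cauchy--Schwarz step. Writing $c+2\sqrt{c-1}=(1+\sqrt{c-1})^2$, one lower-bounds the right-hand (\Psjf/) integral by a ratio of squares via Cauchy--Schwarz, reducing the required per-rank inequality to a \emph{linear}-in-the-integrand statement carrying only the constant $1+\sqrt{c-1}$. Using $x(v)\le c\,x(u)$ and $x(v)\ge x(u)/c$ for $u,v$ in the same rank to control the within-rank growth of $\int_0^x t^2 f_X(t)\,dt$, the remaining inequality collapses to a one-variable inequality in $\beta=\frac{1-\rho_{c^{r+1}}}{1-\rho_{c^r}}\in(0,1]$ (the ratio of residual-capacity fractions across the rank), checked by elementary calculus; the surplus $\sqrt{c-1}$ is exactly what absorbs the factor-$c$ gap between a job's own size and the top of its rank. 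Summing the per-rank inequalities and adding back the residence bound finishes the proof.

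The main obstacle is the per-rank inequality with the sharp constant. The subtlety is that \Prio/ makes a rank-$r$ job wait for \emph{all} rank-$\le r$ work, including rank-$r$ work a \Psjf/ server would skip, so a pointwise comparison of \Prio/ and \Psjf/ at the same size is hopeless — the ratio diverges as $\rho\to1$. The comparison only balances after averaging over the whole rank, because the ``extra'' work \Prio/ charges a small rank-$r$ job for is itself rank-$r$ work, hence within a factor $c$ of that job's size, and therefore already appears inflating the \Psjf/ waiting time of the \emph{larger} jobs in the same rank. Converting this anti-correlation into exactly $(1+\sqrt{c-1})^2$ rather than some weaker constant is where the Cauchy--Schwarz step and the $\beta$-calculus earn their keep.
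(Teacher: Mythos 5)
Your approach is genuinely different from the paper's, and unfortunately it has a fatal gap at the very first reduction step. The paper does not work from the closed-form waiting-time formulas at all; it proves $\E{T}=\E{D}$ for a "delay-due-to-a-job" random variable $D_j$ (which bundles $j$'s own size, the delay $j$ inflicts on later-arriving lower-priority jobs, and the delay $j$ receives from earlier-arriving higher-priority jobs), then couples the \Prio/ and \Psjf/ systems so that they always hold identical per-rank work, and finally compares $\E{D_j^q}$ rank by rank with three cases ($q<r$, $q=r$, $q>r$); the constant $c+2\sqrt{c-1}$ comes from an optimization over an integer $m$ in the $q>r$ case.

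The gap in your plan is the claim that, because $R^{\Prio/}(x)=\frac{x}{1-\rho_{c^r}}\le\frac{x}{1-\rho_x}=R^{\Psjf/}(x)$ pointwise and $c+2\sqrt{c-1}\ge1$, it suffices to prove $\E{W}^{\Prio/}\le(c+2\sqrt{c-1})\,\E{W}^{\Psjf/}$ for the waiting-time parts alone. That last inequality is false in general. Take $X$ uniform on $[1,1+\epsilon]$ with $c=1+\epsilon$, so all mass sits in a single rank. Then
\begin{equation*}
  \E{W}^{\Prio/}\approx\frac{\rho/2}{1-\rho},
  \qquad
  \E{W}^{\Psjf/}\approx\frac{1}{2\rho}\Bigl(\frac{\rho}{1-\rho}+\ln(1-\rho)\Bigr),
\end{equation*}
and the ratio $\E{W}^{\Prio/}/\E{W}^{\Psjf/}$ tends to $2$ as $\rho\to0$, \emph{uniformly in how small $\epsilon$ is}, while $c+2\sqrt{c-1}\to1$ as $\epsilon\to0$. (Already at $c=1.1$, $\rho=0.1$ the waiting ratio is about $2.1$ while $c+2\sqrt{c-1}\approx1.73$.) The full lemma remains true in this example only because the residence-time surplus $\E{R}^{\Psjf/}-\E{R}^{\Prio/}$ is strictly positive and absorbs the waiting-time deficit; throwing that surplus away, as your first step does, gives up exactly the slack you need. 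The paper's delay decomposition never runs into this: the cancellation between a rank-$r$ job's extra \Prio/ wait and the reduced delay it imposes on other rank-$r$ jobs is captured at the level of each job-pair interaction in $D_j^q$, not split across separate "waiting" and "residence" integrals.

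A secondary concern, even setting the above aside, is the claim that the per-rank waiting inequality collapses to a one-variable inequality in $\beta=\frac{1-\rho_{c^{r+1}}}{1-\rho_{c^r}}$ after Cauchy--Schwarz. The per-rank inequality depends on how the density is distributed \emph{within} the rank (e.g., concentrated near $c^r$ versus near $c^{r+1}$), not just on the endpoint residual capacities, and the quantities $\rho_x$ and $\int_0^x t^2 f_X(t)\,dt$ can change by $\Theta(1)$ across a single rank, so the "changes slowly" heuristic does not hold. But the Step 1 reduction is already fatal.
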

\begin{proof}
  We will consider a new random variable, $D$, the delay due to a job.
  $D$ is defined for scheduling policies that assign every job a fixed priority,
  like \Prio/ and \Psjf/.
  For a given job $j$ of size $x$, $D_j$ is
  \begin{itemize}
    \item
      the amount $j$ delays other jobs,
      namely $x$ times the number of jobs with lower priority than $j$
      in the system when $j$ arrives,
    \item
      plus the amount other jobs that arrived before $j$ delay $j$,
      namely the total remaining size of jobs with higher priority than $j$
      in the system when $j$ arrives,
    \item
      plus $j$'s size.
  \end{itemize}

  Note that the response time of a job $\ell$ is equal to $\ell$'s size,
  plus the amount $\ell$ is delayed by jobs that arrived before $\ell$,
  plus the amount $l$ is delayed by jobs that arrive after $\ell$.
  Each of those amounts of time is accounted for in the delay of exactly one job.
  As a result, the sum of the delays of the jobs in a busy period
  equals the sum of the response times of those jobs.
  Therefore, in steady state, mean response time and mean delay are equal:
  \begin{equation*}
    \E{T} = \E{D}.
  \end{equation*}

  Therefore, it suffices to show that
  \begin{equation*}
    \E{D}^{\Prio/} \le (c+2\sqrt{c-1})\E{D}^{\Psjf/}.
  \end{equation*}

  Let us consider a pair of coupled systems receiving the same arrivals:
  A single-server system with \Psjf/ scheduling,
  and a single-server system with \Prio/ scheduling.

  Note that \Psjf/ and \Prio/ both prioritize all jobs of lower ranks
  over all jobs of higher ranks.
  As a result, both coupled systems will server jobs of the same ranks
  at the same times,
  and will always have the same amount of remaining work
  of each rank.

  Let us consider the expected delay due to a particular ``tagged'' job $j$,
  arriving to a steady-state system.
  The expected delay due to $j$ is equal to each system's mean delay
  by the PASTA property~\cite{wolff-pasta}.
  Let $j$ be a job of size $x$ with rank $r = \log_c x$.

  The delay due to $j$, $D_j$, is a summation over each job in the system
  at the moment $j$ arrives.
  Let $D_j^q$ be the delay caused by the interaction of $j$ and jobs
  of rank $q$ that are in the system when $j$ arrives,
  including $j$'s size in $D_j^r$.
  Then we can write $D_j$ in terms of the $D_j^q$s:
  \begin{equation*}
    D_j = \sum_{q = -\infty}^{\infty} D_j^q.
  \end{equation*}
  Therefore, it suffices to show for all ranks $q$ that
  \begin{equation}
    \label{eqn:delay-rank}
    \E{D_j^q}^{\Prio/} \le (c + 2\sqrt{c-1}) \E{D_j^q}^{\Psjf/}.
  \end{equation}

  Let $X_{j'}$ denote the original size of a job $j'$,
  and let $R_{j'}$ denote the remaining size of $j'$.
  Let $J^q$ denote the set of jobs of rank $q$ in the system
  at the time $j$ arrives.
  Let $N^q$ denote the number of jobs of rank $q$
  in the system at the time $j$ arrives.

  We now consider three cases: $q < r$, $q = r$, and $q > r$.

  \begin{case}[$q < r$]
    Because $q < r$,
    all jobs in rank $q$ have higher priority than $j$.
    As a result, under both \Psjf/ and \Prio/,
    $D_j^q$ is equal to the total remaining size of jobs of rank $q$:
    \[D_j^q = \sum_{j' \in J^q} R_{j'}.\]
    As noted above, this is equal in the two systems due to the coupling.
    This proves (\ref{eqn:delay-rank}) in this case.
  \end{case}

  \begin{case}[$q = r$]
    Because \Prio/ uses First-Come-First-Served scheduling within a rank,
    $D_j^r$ in the \Prio/ system is equal to
    the total remaining size of jobs of rank $q$:
    \[{D_j^r}^{(\Prio/)} = \sum_{j' \in {J^r}^{(\Prio/)}} \mathllap{R_{j'}.}\]
    In contrast, $D_j^r$ in the \Psjf/ system is equal to the
    total remaining size of jobs of rank $r$ with size at most $x$,
    plus $x$ times the number of jobs of rank $r$ with size
    more than $x$:
    \begin{equation*}
      {D_j^r}^{(\Psjf/)} = \sum_{j' \in {J^r}^{(\Psjf/)}|X_{j'} \le x}\mathllap{R_{j'}\quad} + \sum_{j' \in {J^r}^{(\Psjf/)}|X_{j'} > x} \mathllap{x.\ \quad}
    \end{equation*}
    Noting that $x \ge c^r$ and the remaining size of any job of rank $r$ is at most
    $c^{r+1}$, we can lower bound ${D_j^r}^{(\Psjf/)}$:
    \begin{align*}
      \ifwidecol{}{\MoveEqLeft} {D_j^r}^{(\Psjf/)} \ifwidecol{}{\\}
      &\ge \sum_{j' \in {J^r}^{(\Psjf/)}|X_{j'} \le x} \mathllap{R_{j'}\quad}
        + \sum_{j' \in {J^r}^{(\Psjf/)}|X_{j'} > x} \mathllap{c^r\,\quad}\\
      &\ge \sum_{j' \in {J^r}^{(\Psjf/)}|X_{j'} \le x} \mathllap{R_{j'}\quad}
        + \sum_{j' \in {J^r}^{(\Psjf/)}|X_{j'} > x} \mathllap{\frac{R_{j'}}{c}\!\quad}\\
      &\ge \frac{1}{c}\sum_{j' \in {J^r}^{(\Psjf/)}} \mathllap{R_{j'}.\!\!\!}
    \end{align*}
    As noted above, $\sum_{j' \in J^q} R_{j'}$ is equal in both systems.
    As a result,
    \[\E{D_j^q}^{\Prio/} \le c \E{D_j^q}^{\Psjf/},\]
    which proves (\ref{eqn:delay-rank}) in this case.
  \end{case}

  \begin{case}[$q > r$]
    Because $q > r$, all jobs in rank $q$
    have lower priority that $j$.
    As a result, in both systems,
    \[D_j^q = x N^q.\]
    Also, note that $x$ is independent of $N^q$.
    Therefore, we simply need to show that
    \[\E{N^q}^{\Prio/} \le (c+2\sqrt{c-1}) \E{N^q}^{\Psjf/}.\]
    However, it is possible for there to be twice as many jobs of rank $q$
    in the \Prio/ system as in the \Psjf/ system, regardless of the value of $c$.
    In particular, there could
    be two rank $q$ jobs in the \Prio/ system and one rank $q$ job in the \Psjf/ system,
    if one of the rank $q$ jobs in the \Prio/ system has very little remaining size.

    Let $j^\textrm{old}$ be the oldest job of rank~$q$ in the \Prio/ system
    at a given time.
    Because \Prio/ serves jobs in FCFS order,
    only $j^\textrm{old}$ has been processed,
    so only $j^\textrm{old}$ can have remaining size under $c^q$.
    Therefore, we can bound the total remaining size of the rank $q$ jobs
    in the \Prio/ system:
    \[\sum_{j' \in {J^q}^{(\Prio/)}} \mathllap{R_{j'}\ \ } \ge R_{j^\textrm{old}} + c^q  ({N^q}^{(\Prio/)}-1).\]
    Likewise, we can bound the total remaining size of the rank $q$ jobs
    in the \Psjf/ system:
    \[\sum_{j' \in {J^q}^{(\Psjf/)}} \mkern-20mu R_{j'} \le c^{q+1}  {N^q}^{(\Psjf/)}.\]
    Using the fact that $\sum_{j' \in J^q} R_{j'}$ is equal in both systems gives us
    \[R_{j^\textrm{old}} + c^q  ({N^q}^{(\Prio/)}-1) \le c^{q+1}  {N^q}^{(\Psjf/)},\]
    which rearranges to
    \[{N^q}^{(\Prio/)} \le c  {N^q}^{(\Psjf/)} + 1 - \frac{R_{j^\textrm{old}}}{c^q}.\]
    This implies ${N^q}^{(\Prio/)} \le c  {N^q}^{(\Psjf/)} + 1$,
    which allows us to relate the expected numbers of jobs in the systems:
    \ifwidecol{\begin{equation}}{\begin{align}}
      \ifwidecolignorealigny
      \notag
      \MoveEqLeft \E{{N^q}^{(\Prio/)}}\\
      \notag
      \ifwidecol{}{&}\le c \E{{N^q}^{(\Psjf/)}}\\
      \label{eqn:hard-case}
      \ifwidecol{}{&\qquad} + \Pr \{{N^q}^{(\Prio/)} > c {N^q}^{(\Psjf/)}\}.
    \ifwidecol{\end{equation}}{\end{align}}

    Recall that ${N^q}^{(\Prio/)}$ and ${N^q}^{(\Psjf/)}$ are both integers.
    This means that if ${N^q}^{(\Prio/)} > c {N^q}^{(\Psjf/)}$, then
    \[{N^q}^{(\Prio/)} \ge {N^q}^{(\Psjf/)} + 1.\]
    As a result, if ${N^q}^{(\Prio/)} > c {N^q}^{(\Psjf/)}$, then
    \[{N^q}^{(\Psjf/)} + 1 \le c  {N^q}^{(\Psjf/)} + 1 - \frac{R_{j^\textrm{old}}}{c^q},\]
    meaning that
    \begin{equation}
      \label{eqn:rjold-small}
      R_{j^\textrm{old}} \le (c-1)c^q{N^q}^{(\Psjf/)}.
    \end{equation}
    Therefore, we will show that either
    $\Pr \{{N^q}^{(\Prio/)} > c  {N^q}^{(\Psjf/)}\}$
    is small or
    $\E{{N^q}^{(\Psjf/)}}$
    is large.
    In either case, \eqref{eqn:hard-case} will imply
    the desired bound \eqref{eqn:delay-rank}.

    Let us condition on $R_{j^\textrm{old}}$. Because $j$ is a Poisson arrival,
    $j$ sees a time-average state of $R_{j^\textrm{old}}$. The (stochastically)
    smallest this
    distribution can be is the uniform distribution on $[0, c^q]$, because
    the original size of a rank $q$ job is at least $c^q$.
    In particular, for any $\ell$,
    \begin{equation*}
      \Pr \{ R_{j^\textrm{old}} < \ell\} \le \frac{\ell}{c^q}.
    \end{equation*}

    Let $\rho_q$ be the probability that $j$ sees a rank $q$ job in the system
    on arrival.
    Let $m$ be the largest integer such that
    \[\Pr \{{N^q}^{(\Prio/)} > c  {N^q}^{(\Psjf/)}\} \ge m (c-1)\rho_q .\]
    Then (\ref{eqn:rjold-small}) implies
    \begin{equation}
      \label{eq:pr-lower-bound}
      \Pr \{R_{j^\textrm{old}} \le c^q(c-1){N^q}^{(\Psjf/)}\} \ge m (c-1) \rho_q.
    \end{equation}

    Regardless of the correlation between ${N^q}^{(\Psjf/)}$ and $R_{j^\textrm{old}}$,
    we must have
    \begin{equation}
      \label{eq:pr-upper-bound-1}
      \Pr \{ {N^q}^{(\Psjf/)} \ge m\} \ge (c-1) \rho_q.
    \end{equation}
    This is because if ${N^q}^{(\Psjf/)} \le m-1$ on a particular arrival,
    and also ${N^q}^{(\Prio/)} > c  {N^q}^{(\Psjf/)}$,
    then by~(\ref{eqn:rjold-small}),
    \begin{equation}
      \label{eq:rjold}
      R_{j^\textrm{old}} \le (c-1)c^q(m-1).
    \end{equation}
    But an arrival only sees job~$j^{\textrm{old}}$ in the system at all
    with probability $\rho_q$.
    Moreover, conditional on seeing a job~$j^{\textrm{old}}$ at all,
    the arrival observes \eqref{eq:rjold} with probability at most $(c-1)(m-1)$,
    because every rank~$q$ job has size at least~$c^q$.
    This means
    \[\Pr \{R_{j^\textrm{old}} \le (c-1)c^q(m-1)\} \le (m-1)(c-1) \rho_q,\]
    which together with \eqref{eq:pr-lower-bound} implies \eqref{eq:pr-upper-bound-1}.

    By a similar argument as \eqref{eq:pr-upper-bound-1},
    \[\Pr \{ {N^q}^{(\Psjf/)} \ge m-z\} \ge (z + 1)(c-1) \rho_q\]
    for any integer $z < m$.

    In addition, if there is a job in the \Prio/ system then there is a job
    in the \Psjf/ system:
    \[\Pr \{ {N^q}^{(\Psjf/)} \ge 1 \} \ge \rho_q.\]
    We can combine these bounds on the probability of ${N^q}^{(\Psjf/)}$ taking
    specific values to a derive
    a bound on its expectation:
    \begin{align*}
      \E{{N^q}^{(\Psjf/)}}
      &\ge \biggl(\sum_{z=0}^{m-1} (m-z)(c-1)\rho_q\biggr) + (\rho_q-\rho_q m(c-1))\\
      &= \biggl(\biggl(\sum_{z=0}^{m-1} (m-z-1)(c-1)\biggr) + 1\biggr)\rho_q\\
      &= \biggl(1 + \frac{m}{2}(m-1)(c-1)\biggr) \rho_q.
    \end{align*}

    We are now ready to show that either
    $\Pr \{{N^q}^{(\Prio/)} > c  {N^q}^{(\Psjf/)}\}$
    is small or
    $\E{{N^q}^{(\Psjf/)}}$
    is large
    by bounding their ratio for any value of $m$:
    \begin{align*}
      \ifwidecol{}{\MoveEqLeft}
      \frac{\Pr \{{N^q}^{(\Prio/)} > c  {N^q}^{(\Psjf/)}\}}{\E{{N^q}^{(\Psjf/)}}}
      \ifwidecol{}{\\}
      &\le \frac{(m+1)(c-1)\rho_q}{(1 + \frac{m}{2}(m-1)(c-1))\rho_q}\\
      &=\frac{m+1}{\frac{1}{c-1} + \frac{m}{2}(m-1)}.
    \end{align*}
    This expression is maximized when
    \[m = \sqrt{\frac{2c}{c-1}} - 1,\]
    so
    \begin{align*}
      \ifwidecol{}{\MoveEqLeft}
      \frac{\Pr \{{N^q}^{(\Prio/)} > c  {N^q}^{(\Psjf/)}\}}{\E{N^q}^{(\Psjf/)}}
      \ifwidecol{}{\\}
      &\le \frac{\sqrt{\frac{2c}{c-1}}}{\frac{1}{c-1} + (\sqrt{\frac{c}{c-1}} - \frac{1}{2})(\sqrt{\frac{2c}{c-1}} - 2)}\\
      &=\frac{1}{\sqrt{\frac{2c}{c-1}} - \frac{3}{2}}\\
       &\le 2\sqrt{c-1}.
    \end{align*}
    where the final bound holds due to the fact that $1 < c \le 2$,
    by the definition of $c$.
    Combining this result with (\ref{eqn:hard-case}) yields
    \[\E{{N^q}^{(\Prio/)}} \le  (c + 2\sqrt{c-1}) \E{{N^q}^{(\Psjf/)}},\]
    which is (\ref{eqn:delay-rank}), as desired.
    \qedhere
   \end{case}
  \end{proof}

\section{Lemma~\ref{lem:lin-bound}}
\label{app:lin-bound}

\begin{lemma}
  \label{lem:lin-bound}
  For any size distribution~$X$ which is either
  (i)~bounded or
  (ii)~unbounded with tail having upper Matuszewska index\matuszewska{} less than~$-2$,
  \begin{equation*}
    \lim_{\rho \to 1}\frac{\ln^2\frac{1}{1-\rho}}{\E{T}^{\Srpt/}} = 0.
  \end{equation*}
\end{lemma}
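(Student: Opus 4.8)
The plan is to show that $\E{T}^{\Srpt/}$ grows \emph{polynomially fast} in $\frac{1}{1-\rho}$ as $\rho \to 1$, which dominates the logarithmic numerator. Precisely, it suffices to find in each of the two cases a constant $\delta > 0$ with $\E{T}^{\Srpt/} = \Omega\bigl((1-\rho)^{-\delta}\bigr)$: combined with the elementary limit $\lim_{\rho \to 1}(1-\rho)^{\delta}\ln^2\frac{1}{1-\rho} = 0$ (set $1-\rho = e^{-s}$ to obtain $s^2 e^{-\delta s} \to 0$), this gives $\dfrac{\ln^2\frac{1}{1-\rho}}{\E{T}^{\Srpt/}} \le C\,(1-\rho)^{\delta}\ln^2\frac{1}{1-\rho} \to 0$. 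So the whole task is the lower bound $\E{T}^{\Srpt/} = \Omega\bigl((1-\rho)^{-\delta}\bigr)$, which I would obtain by different arguments in the bounded and the unbounded case.

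For case~(i), $X$ bounded by some $x_{\max} < \infty$, no heavy machinery is needed. At every instant the number of jobs present in the \Srpt/ system is at least (total unfinished work)$/x_{\max}$; and total unfinished work is pathwise identical under every work-conserving policy, so its stationary mean equals the M/G/1 workload $\frac{\lambda \E{X^2}}{2(1-\rho)}$~\cite{mor-book}. By Little's law, $\lambda\,\E{T}^{\Srpt/}$ equals the stationary mean number in system, so $\E{T}^{\Srpt/} \ge \frac{\E{X^2}}{2x_{\max}(1-\rho)}$, giving $\delta = 1$.

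For case~(ii), $X$ unbounded with the tail $\bar F_X$ having upper Matuszewska index less than $-2$, the bound above is vacuous, and I would invoke the heavy-traffic analysis of Lin et al.~\cite{Lin}, which characterizes the exact growth rate of $\E{T}^{\Srpt/}$ as $\rho \to 1$ in terms of the job size distribution. Two things need checking. First, the upper Matuszewska index being less than $-2$ is precisely the regularity hypothesis under which their result applies, and by the definition of the index it also yields a Potter-type tail bound $\bar F_X(t) \le C t^{-2-\epsilon}$ for some $\epsilon > 0$ and all large $t$ (in particular $\E{X^2} < \infty$). Second, inserting this tail bound into their rate formula gives $\E{T}^{\Srpt/} = \Omega\bigl((1-\rho)^{-\delta}\bigr)$ with $\delta = \frac{\epsilon}{1+\epsilon} > 0$; informally, $\E{T}^{\Srpt/}$ is of order $\frac{1}{(1-\rho)\,T(\rho)}$, where $T(\rho)$ is the critical job size at which $1-\rho_{T(\rho)}$ first becomes comparable to $1-\rho$, and $\int_{T}^{\infty} t f_X(t)\,dt \le C' T^{-1-\epsilon}$ forces $T(\rho) = O\bigl((1-\rho)^{-1/(1+\epsilon)}\bigr)$, hence $\E{T}^{\Srpt/}$ is of order at least $(1-\rho)^{-1+1/(1+\epsilon)} = (1-\rho)^{-\epsilon/(1+\epsilon)}$. (The same conclusion also follows directly from the Schrage--Miller formula $\E{T(x)}^{\Srpt/} \ge \frac{\lambda \int_0^{x} t^2 f_X(t)\,dt}{2(1-\rho_x)^2}$~\cite{mor-book}, by bounding the numerator below by a positive constant for $x$ bounded below and integrating against $f_X$, without appealing to \cite{Lin}.)

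The main obstacle is case~(ii): extracting from \cite{Lin} a statement in a form that cleanly yields the lower bound $\Omega\bigl((1-\rho)^{-\delta}\bigr)$, and confirming that our disjunctive assumption lies inside their hypotheses --- in particular that ``upper Matuszewska index of the tail less than $-2$'' supplies simultaneously the moment and the tail-regularity requirements of their heavy-traffic theorem. Case~(i), the Potter-type tail bound, and the concluding limit are all routine.
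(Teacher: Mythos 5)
Your proposal is correct, and it takes a somewhat different route than the paper. The paper proves the lemma by citing two facts from Lin et al.~\cite{Lin}: their Theorem~1 gives $\E{T}^{\Srpt/} = \Theta\bigl(\tfrac{1}{1-\rho}\bigr)$ for bounded~$X$, and a fact extracted from the \emph{proof} of their Theorem~2 --- that $\phi\bigl(\tfrac{1}{1-\rho}\bigr)(1-\rho)H^{-1}(\rho) \to 0$ for every sub-polynomial $\phi$ --- handles the unbounded case with $\phi(y) = \ln^2 y$. You instead reduce the lemma to the stronger intermediate claim $\E{T}^{\Srpt/} = \Omega\bigl((1-\rho)^{-\delta}\bigr)$ for some $\delta > 0$, which indeed suffices. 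For case~(i) your argument (workload invariance across work-conserving policies, the count bound $N \ge W/x_{\max}$, and Little's law) is elementary, self-contained, and avoids \cite{Lin} entirely, yielding $\delta = 1$; this is arguably cleaner than the paper's citation of Theorem~1. For case~(ii) you combine the publicly stated $\Theta$-bound $\E{T}^{\Srpt/} = \Theta\bigl(\tfrac{1}{(1-\rho)H^{-1}(\rho)}\bigr)$ from Lin et al.'s Theorem~2 with the Potter-type bound $\bar F_X(t) \le C t^{-2-\epsilon}$ implied by the Matuszewska hypothesis, which forces $H^{-1}(\rho) = O\bigl((1-\rho)^{-1/(1+\epsilon)}\bigr)$ and hence $\delta = \epsilon/(1+\epsilon)$. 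This is valid and in fact implies the $\phi$-lemma the paper cites, so your route uses only the headline statement of Theorem~2 plus standard Karamata/Potter reasoning rather than a lemma buried inside a proof. One small caveat: the parenthetical Schrage--Miller fallback is not as ``direct'' as you suggest --- after ``integrating against $f_X$'' one still needs the Potter-type control of $1-\rho_x$ (equivalently, of $H^{-1}(\rho)$), so the Matuszewska hypothesis and essentially the same analysis are doing the work there as well; it avoids \emph{citing} \cite{Lin} but not the underlying estimate.
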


\begin{proof}
  Lin et al.~\cite{Lin} show in their Theorem~1 that if $X$ is bounded,
  then
  \[\E{T}^{\Srpt/} = \theta \biggl(\frac{1}{1-\rho} \biggr),\]
  proving case~(i).
  They also show in their Theorem~2 that if the Matuszewska index
  of the tail of $X$ is less than~$-2$, then
  \[\E{T}^{\Srpt/} = \theta \biggl(\frac{1}{(1-\rho)H^{-1}(\rho)} \biggr),\]
  where $H^{-1}(\cdot)$ is the inverse of $H(x) = \rho_x/\rho$.
  Furthermore,
  in their proof of Theorem~2,
  they show that for any function $\phi(y)$ such that
  $\phi(y) = o(y^\epsilon)$ for all $\epsilon > 0$,\footnote{%
    While Lin et al.~\cite{Lin} only mention this property for
    the specific case of $\phi(y) = \ln y$,
    their proof easily generalizes.}
  \[\lim_{\rho \to 1} \phi\biggl(\frac{1}{1-\rho} \biggr)(1-\rho)H^{-1}(\rho) = 0.\]
  Applying this result with $\phi(y) = \ln^2(y)$
  proves in case~(ii).
\end{proof}

\section{Additional Simulations}
\label{app:simulation}

We include here some additional simulation results
covering a wider range of cases than Section~\ref{sec:simulation}.
We only show the tightest guardrails ($g = 1$),
as those guardrails generally yield the lowest mean response times.
We vary the parameters of the simulations in from Section~\ref{sec:simulation}
in three different ways:
\begin{itemize}
\item
  adding many more servers
  (Figures~\ref{fig:bp-100-many-pol-g} and~\ref{fig:bm-100-many-pol-g}),
\item
  decreasing load (Figure~\ref{fig:bp-light-many-pol-g}), and
\item
  varying the job size distribution
  (Figures~\ref{fig:hyper-many-pol-g} and~\ref{fig:exp-many-pol-g}).
\end{itemize}
In nearly every case guardrails improve or at least do not degrade
mean response time of the underlying policy.
In particular, as a general rule,
G-LWL is nearly always tied for minimum mean response time
among all dispatching policies simulated.

We omit the FPI heuristic from simulations involving the Bimodal job size distribution
because Hytti\"a et al. \cite{hyytia-fpi} only derive the policy
for continuous job size distributions.
We omit the FPI heuristic from the simulations of other job size distributions
in Figures~\ref{fig:hyper-many-pol-g} and~\ref{fig:exp-many-pol-g}
due to lack of time.

\begin{figure*}
  \begin{subfigure}{.48\linewidth}
    \centering
    \includegraphics[width=\linewidth, trim={0 1em 0 0}]{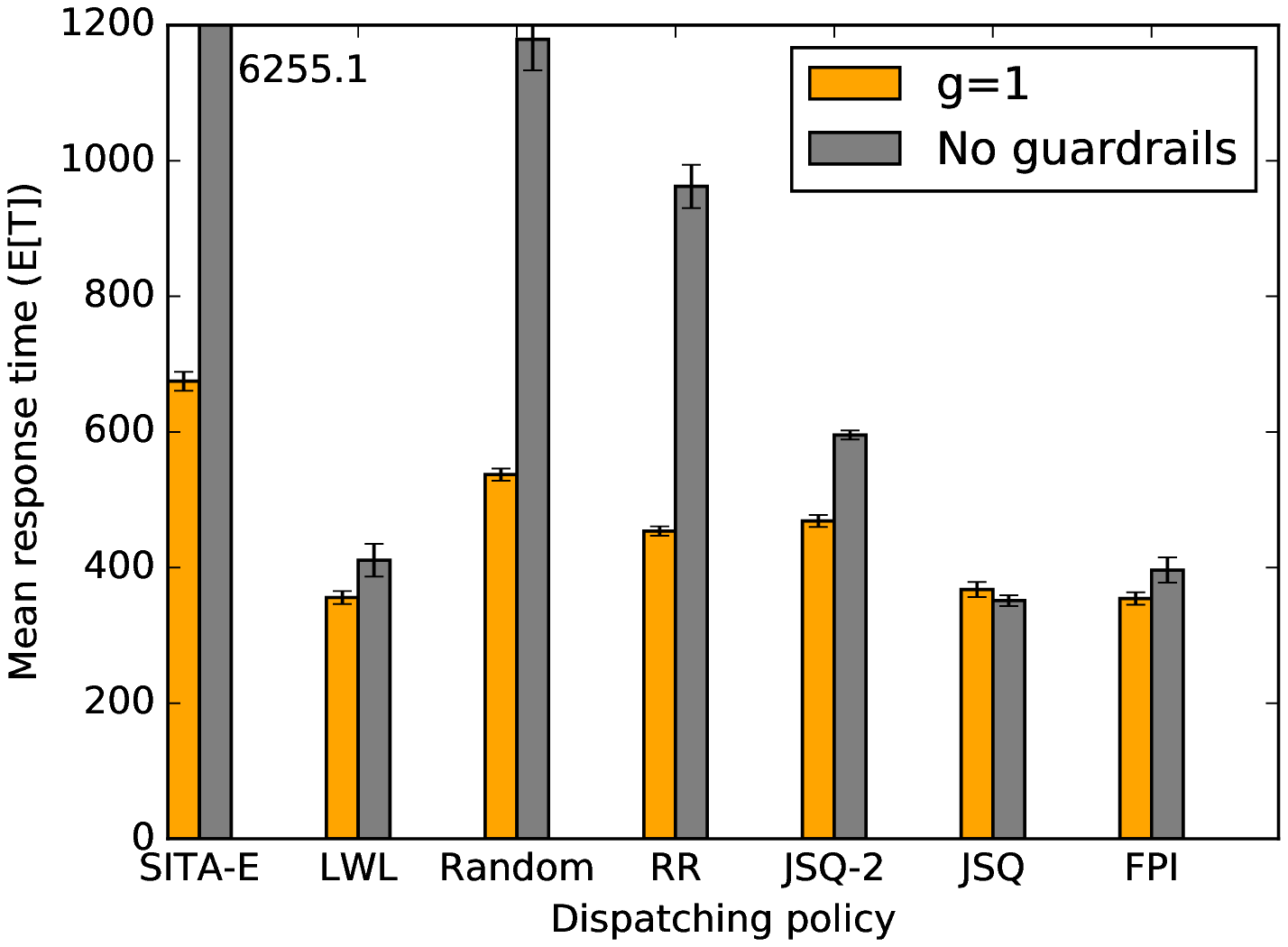}
  \caption{$\rho = 0.98$}
  \vspace{-0.5\baselineskip}
  \end{subfigure}\hfill
  \begin{subfigure}{.48\linewidth}
    \centering
    \includegraphics[width=\linewidth, trim={0 1em 0 0}]{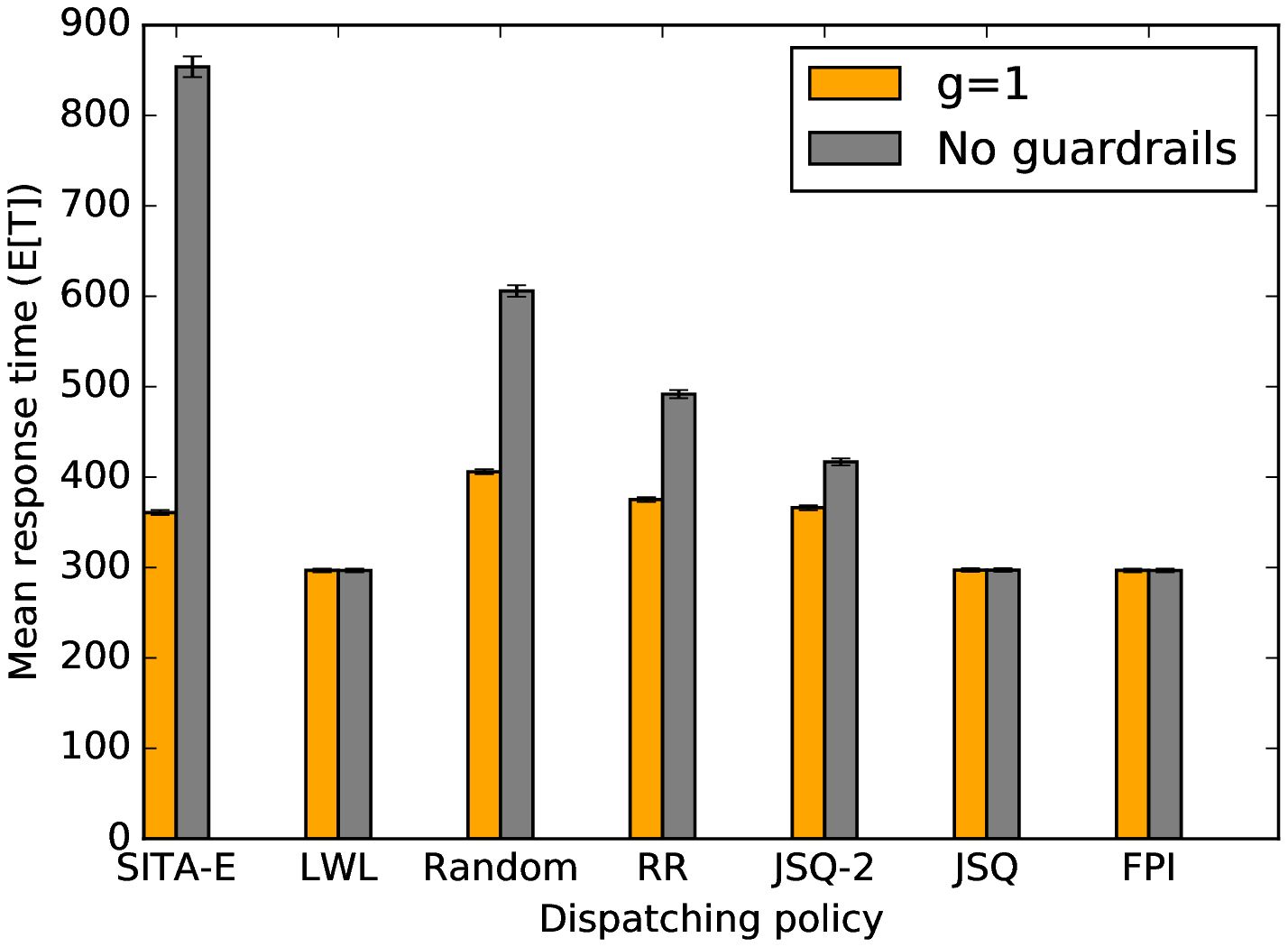}
    \caption{$\rho = 0.80$}
    \vspace{-0.5\baselineskip}
  \end{subfigure}
  \caption{%
    Simulation with many servers: $k=100$.
    Size distribution shown is Bounded Pareto
    with $\alpha=1.5$ and range $[1, 10^6]$.
    $C^2 \sim 333$. 10 trials simulated, 95\% confidence intervals shown.}
  \label{fig:bp-100-many-pol-g}
\end{figure*}

\begin{figure*}
  \begin{subfigure}{.48\linewidth}
    \centering
    \includegraphics[width=\linewidth, trim={0 1em 0 0}]{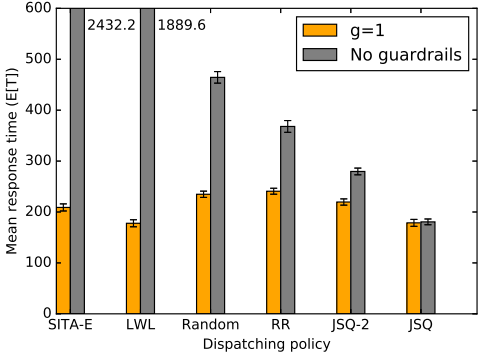}
  \caption{$\rho = 0.98$}
  \vspace{-0.5\baselineskip}
  \end{subfigure}\hfill
  \begin{subfigure}{.48\linewidth}
    \centering
    \includegraphics[width=\linewidth, trim={0 1em 0 0}]{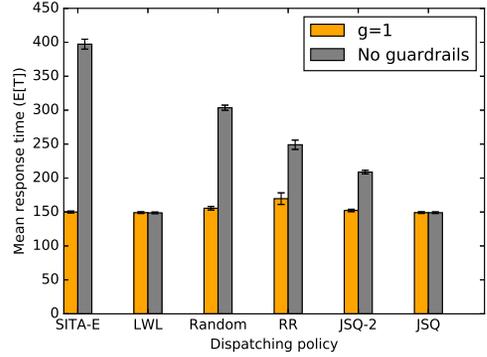}
    \caption{$\rho = 0.80$}
    \vspace{-0.5\baselineskip}
  \end{subfigure}
  \caption{%
      Simulation with many servers: $k=100$.
      Size distribution shown is Bimodal with
      99.95\% size 1 jobs and 0.05\% size 1000 jobs.
      $C^2 \sim 221$. 10 trials simulated, 95\% confidence intervals shown.}
  \label{fig:bm-100-many-pol-g}
\end{figure*}

\begin{figure*}
  \begin{subfigure}{.48\linewidth}
    \centering
    \includegraphics[width=\linewidth, trim={0 1em 0 0}]{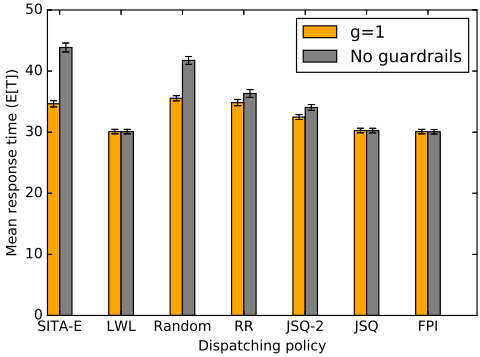}
    \caption{Bounded Pareto job size distribution:
      $\alpha=1.5$, range $[1, 10^6]$,
      $C^2 \sim 333$.}
    \vspace{-0.5\baselineskip}
  \end{subfigure}\hfill
  \begin{subfigure}{.48\linewidth}
    \centering
    \includegraphics[width=\linewidth, trim={0 1em 0 0}]{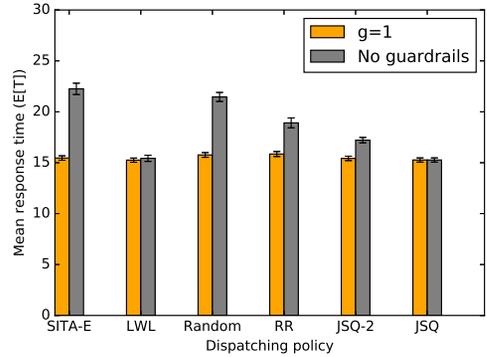}
    \caption{Bimodal job size distribution:
      size 1 w.p. 99.95\%, size 1000 w.p. 0.05\%,
      $C^2 \sim 221$.}
    \vspace{-0.5\baselineskip}
  \end{subfigure}
  \caption{%
    Simulation with light traffic: $\rho = 0.5$.
    Simulation uses $k=10$ servers.
    Two job size distributions shown.
    10 trials simulated, 95\% confidence intervals shown.}
  \label{fig:bp-light-many-pol-g}
\end{figure*}

\begin{figure*}
  \begin{subfigure}{.48\linewidth}
    \centering
    \includegraphics[width=\linewidth, trim={0 1em 0 0}]{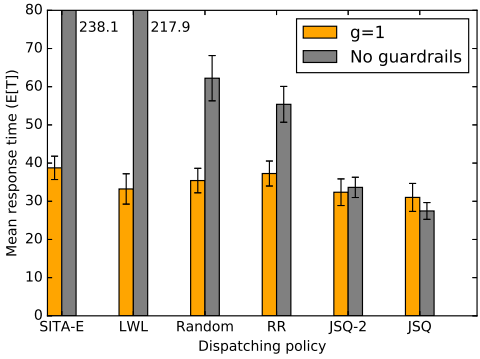}
    \caption{$\rho = 0.98$}
    \vspace{-0.5\baselineskip}
  \end{subfigure}\hfill
  \begin{subfigure}{.48\linewidth}
    \centering
    \includegraphics[width=\linewidth, trim={0 1em 0 0}]{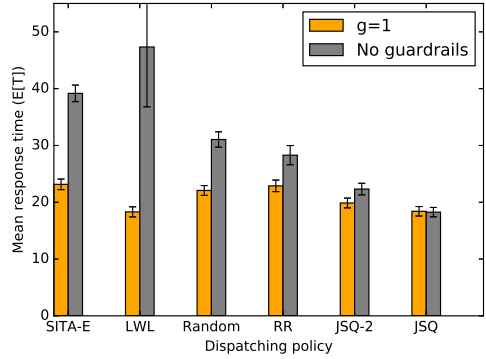}
    \caption{$\rho = 0.80$}
    \vspace{-0.5\baselineskip}
  \end{subfigure}
  \caption{%
    Simulation with different size distribution:
    Hyperexponential with mean $\sim 1.5$ and $C^2 \sim 444$.
    Simulation uses $k=10$ servers.
    10 trials simulated, 95\% confidence intervals shown.}
  \label{fig:hyper-many-pol-g}
\end{figure*}

\begin{figure*}
  \begin{subfigure}{.48\linewidth}
    \centering
    \includegraphics[width=\linewidth, trim={0 1em 0 0}]{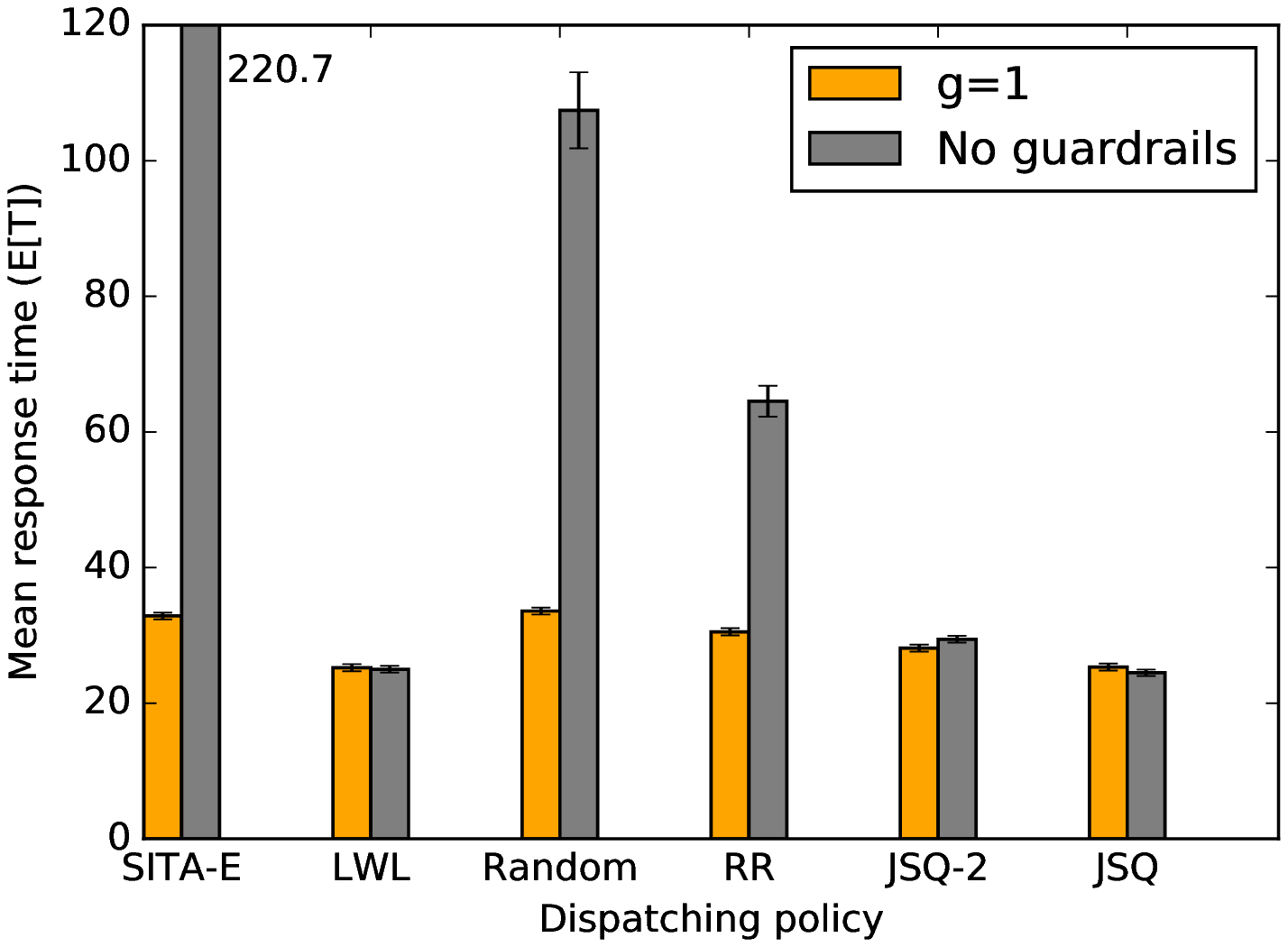}
    \caption{$\rho = 0.98$}
    \vspace{-0.5\baselineskip}
  \end{subfigure}\hfill
  \begin{subfigure}{.48\linewidth}
    \centering
    \includegraphics[width=\linewidth, trim={0 1em 0 0}]{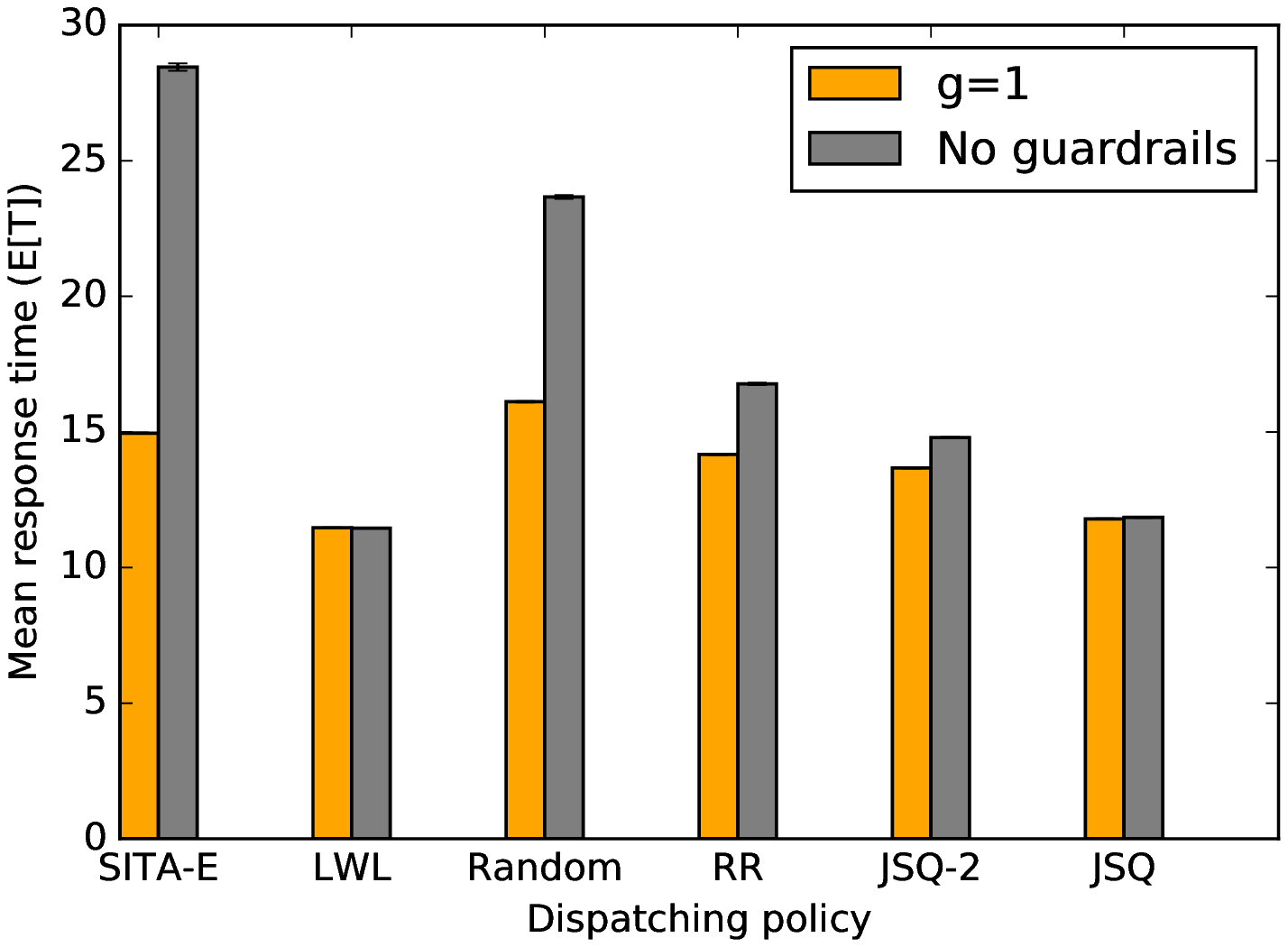}
    \caption{$\rho = 0.80$}
    \vspace{-0.5\baselineskip}
  \end{subfigure}
  \caption{%
    Simulation with different size distribution:
    Exponential with mean $1$.
    Simulation uses $k=10$ servers.
    10 trials simulated, 95\% confidence intervals shown.}
  \label{fig:exp-many-pol-g}
\end{figure*}

Figures~\ref{fig:bp-100-many-pol-g} and~\ref{fig:bm-100-many-pol-g}
show simulations with \emph{many more servers}.
Specifically, they use $k = 100$ servers,
as opposed to $k = 10$ in other simulations.
The only setup where guardrails degrade mean response time
of the underlying policy
is JSQ with Bounded Pareto job size distribution in heavy traffic,
shown in Figure~\ref{fig:bp-100-many-pol-g}~(a),
but G-LWL and G-FPI have performance on par with JSQ in that case.

Figure~\ref{fig:bp-light-many-pol-g}
shows simulations with \emph{light traffic}, specifically $\rho = 0.5$.
The trends are largely the same as those in Section~\ref{sec:simulation},
but the differences in mean response time are smaller.
This is to be expected
because a large fraction of jobs experience no delay.
In fact, the mean response time is nearly equal to the mean service time
for many of the dispatching policies
($E[T] \approx 30$ in (a), $E[T] \approx 15$ in (b)).
Guardrails are particularly effective in the Bimodal case shown in (b),
dispatching nearly every small job to a server with no other small jobs.
In addition to the loads shown,
we have also simulated a range of loads from $\rho = 0.2$ to $\rho = 0.9975$.
The trends are consistent across all loads,
but the differences are less pronounced at lower loads.

Figures~\ref{fig:hyper-many-pol-g} and~\ref{fig:exp-many-pol-g}
show simulations with \emph{different job size distributions}.
We specifically simulate with Hyperexponential and Exponential
job size distributions,
representing another high-variance distribution
and a low-variance distribution, respectively.
In the Hyperexponential case shown in Figure~\ref{fig:hyper-many-pol-g},
LWL performs particularly poorly without guardrails,
similar to the Bimodal case.
Roughly speaking, this is because there again are two types of jobs,
though each has an exponential distribution instead of a deterministic one,
and a job of the large type can cause many jobs of the small type
to be dispatched to a single server.
But again, guardrails effectively mitigate this problem,
although JSQ slightly outperforms G-LWL in very heavy traffic.
In the Exponential case, LWL without guardrails performs well already,
but adding guardrails does not degrade its performance.

\end{document}